\newtheorem{theorem}{Theorem}
\newtheorem{prop}{Proposition}
\newtheorem{definition}{Definition}
\newtheorem{remark}{Remark}
\def\Real{\mathbb{R}}
\def\Cinfty{{\rm C}^\infty}
\newcommand{\derpar}[2]{\displaystyle\frac{\partial{#1}}{\partial{#2}}}
\newcommand{\restric}[2]{\left.#1\right|_{#2}}
\newcommand{\Lag}{\mathcal{L}}
\newcommand{\vf}{\mathfrak{X}}
\newcommand{\df}{\Omega}
\newcommand{\Tan}{\mathrm{T}}
\newcommand{\inn}{{\mathop{i}\nolimits}}
\newcommand{\Lie}{\mathop{\mathrm{L}}\nolimits}
\renewcommand{\d}{\mathrm{d}}
\def\hs{({\cal M},\Omega,\omega)}
\def\moment#1#2#3{{#1}_{#2}, \ldots, {#1}_{#3}}
\newcommand{\bal}{\begin{align*}}
\newcommand{\eal}{\end{align*}}
\def\beq{\begin{equation}}
\def\eeq{\end{equation}}
\def\bea{\begin{eqnarray}}
\def\eea{\end{eqnarray}}
\def\beann{\begin{eqnarray*}}
\def\eeann{\end{eqnarray*}}
\def\ben{\begin{enumerate}}
\def\een{\end{enumerate}}
\def\bit{\begin{itemize}}
\def\eit{\end{itemize}}
\def\ds{\(\displaystyle}
\title{\sc Variational principles and symmetries on fibered multisymplectic manifolds}
\author{ Jordi Gaset, Pedro D. Prieto-Mart\'inez, Narciso Rom\'an-Roy
 \\
\small Department of Mathematics.
   Ed. C-3, Campus Norte UPC\\
\small   C/ Jordi Girona 1. 08034 Barcelona. Spain.\\
\small     (gaset.jordi@gmail.com, peredaniel@ma4.upc.edu,
narciso.roman@upc.edu)
}
\begin{document}

\maketitle

\begin{abstract}
The standard techniques of variational calculus are
geometrically stated in the ambient of fiber bundles 
endowed with a (pre)multisymplectic structure.
Then, for the corresponding variational equations,
conserved quantities (or, what is equivalent, conservation laws), symmetries,
Cartan (Noether) symmetries, gauge symmetries
and different versions of Noether's theorem are studied
in this ambient.
In this way, this constitutes a general geometric framework for
all these topics that includes, as special cases,
first and higher order field theories and (non-autonomous) mechanics.
\end{abstract}

 \bigskip
\noindent{\bf Key words}:
 \textsl{Variational principles, Symmetries, Conserved quantities, Noether theorem,
Fiber bundles, Multisymplectic manifolds.}

\noindent{\bf AMS s.\,c.\,(2010)}:  70S10, 70S05, 70H50,  49S05, 53D42, 55R10.\null

\section{Introduction}

As it is well known, the most of field equations of
first and higher-order classical field theories and mechanics 
are locally variational; that is, they can be obtained starting from a variational principle.
The phase spaces for all these theories have a similar geometric structure:
they are fiber bundles $\kappa\colon \mathcal{M}\to M$
over an orientable manifold $M$ (of dimension equal to $1$ for mechanical systems,
and greater than $1$ for field theories),
which are endowed with a multisymplectic or a pre-multisymplectic form (depending on the {\sl regularity} of the theory).

The aim of this review work is to state a generic geometric framework 
which allows us to include these variational principles 
for all these kinds of theories in a single formulation
(this is done in Section 2, after stablishing some previous geometric and mathematical background).
The variational equations, which are stated using multivector fields,
include the Euler-Lagrange as well as the Hamilton equations.
Then, we use this unified framework 
to study different kinds of symmetries for these equations,
their conserved quantities (i.e., conservation laws), 
and giving general versions of Noether's theorem
(in Section 3).

From this framework we can recover, as particular cases,
the variational principles and several topics
on the theory of symmetries and conserved quantities
for classical field theories and 
(non-autonomous) mechanics of first and higher order,
both in the Lagrangian and Hamiltonian formulations.
In particular,
let $\pi\colon E \longrightarrow M$ be a fiber bundle. Then,
If ${\cal M}\equiv J^k\pi$ (the $kth$-order jet bundle of $\pi$), 
and $\Omega\equiv\Omega_\Lag$
(the {\sl Poincar\'e-Cartan form} associated to a Lagrangian density $\Lag$),
we recover the classical {\sl Hamilton variational Principle}
and results on symmetries, conservation laws and Noether's theorem
for classical Lagrangian field theories of first order 
(if $k=1$) and higher-order (if $k>1$)
\cite{art:Aldaya_Azcarraga78_2,De-77,art:deLeon_etal2004,
EMR-96,FF-86,Gc-73,proc:Garcia_Munoz83,GMS-97,GS-73,
art:Kouranbaeva_Shkoller00,Krupka,KrupkaStepanova,
pere2,book:Saunders89}.
Taking the suitable {\sl multimomentum bundles} as ${\cal M}$, 
and the associated Hamiltonian counterparts of $\Omega_\Lag$, 
we recover the corresponding {\sl Hamilton-Jacobi variational Principle}
and symmetries for the Hamiltonian formalism 
of first and higher-order field theories
\cite{art:Aldaya_Azcarraga78_1,art:deLeon_etal2004,EMR-99b,art:Echeverria_DeLeon_Munoz_Roman07,HK-04}.
Finally, if in the above situations we take $M=\Real$,
we obtain the analogous results for the Lagrangian and Hamiltonian formalisms
of first and higher-order non-autonomous mechanics
\cite{Ar,LM-96,MS-98,pere1,SC-81}.

All the manifolds are real, second countable and $\Cinfty$. The maps and the structures are $\Cinfty$.  Sum over repeated indices is understood.


\section{Variational principle for multisymplectic systems}

\subsection{Multivector fields}

(See \cite{art:Echeverria_Munoz_Roman98} for details).
Let $\mathcal{M}$ be a $n$-dimensional differentiable manifold.

 \begin{definition}
Sections of $\Lambda^m\Tan \mathcal{M}$ are called 
{\rm $m$-multivector fields} in $\mathcal{M}$;
that is, they are the contravariant skew-symmetric tensors of order $m$ in $\mathcal{M}$.
The set of $m$-multivector fields in $\mathcal{M}$ is denoted as $\mathfrak{X}^m (\mathcal{M})$.
 \end{definition}

For every $\mathbf{X}\in\mathfrak{X}^m(\mathcal{M})$ and $p\in \mathcal{M}$, there exists a neighbourhood
 $U_p\subset \mathcal{M}$ and $X_1,\ldots ,X_r\in\mathfrak{X} (U_p)$ such that
$$\mathbf{X}\vert_{U_p}=\sum_{1\leq i_1<\ldots <i_m\leq r} f^{i_1\ldots i_m}X_{i_1}\wedge\ldots\wedge X_{i_m} \, ,
$$
with $f^{i_1\ldots i_m} \in C^\infty (U_p)$, $m \leqslant r\leqslant{\rm dim}\, \mathcal{M}$.

The classical operations with vector fields in differentiable manifolds 
can be extended to multivector fields.

 \begin{definition}
Let $\Omega\in\df^k({\cal M})$ be a differentiable $k$-form in ${\cal M}$
and let $\mathbf{X}\in\mathfrak{X}^m(\mathcal{M})$;  
the {\rm contraction} between ${\bf X}$ and $\Omega$
is defined as
 \beann
 \inn({\bf X})\Omega\mid_{U_p}&:=& \sum_{1\leq i_1<\ldots <i_m\leq
 r}f^{i_1\ldots i_m} \inn(X_1\wedge\ldots\wedge X_m)\Omega 
\\ &=&
 \sum_{1\leq i_1<\ldots <i_m\leq r}f^{i_1\ldots i_m} \inn
 (X_1)\ldots\inn (X_m)\Omega
 \eeann
 if $k\geq m$, and equal to zero if $k<m$.
For $1\leq j\leq k-1$, the $k$-form $\Omega$ is {\rm $j$-nondegenerate} if, for every $p\in E$ and
 ${\bf X}\in\vf^j({\cal M})$, we have that
$$
\inn({\bf X}_p)\Omega_p =0\  \Leftrightarrow \ {\bf X}_p=0 \ .
$$

The {\rm Lie derivative} with respect to ${\bf X}$ is defined as the graded bracket
 $$
 [\d , \inn ({\bf X})]=\d\inn ({\bf X})-(-1)^m\inn ({\bf X})\d:=\Lie ({\bf X})
 $$
and it is an operation of degree $m-1$.

If ${\bf Y}\in\vf^i({\cal M})$ and ${\bf X}\in\vf^j({\cal M})$,
another operation of degree $i+j-2$ is the
{\rm Schouten-Nijenhuis bracket} of ${\bf X},{\bf Y}$,
which is the bilinear assignment ${\bf Y},{\bf X}\mapsto [{\bf Y},{\bf X}]$, 
where $[{\bf Y},{\bf X}]$ is a $(i+j-1)$-multivector field obtained 
as the graded commutator of $\Lie ({\bf Y})$ and $\Lie ({\bf X})$;that is,
 $$
 [\Lie ({\bf Y}), \Lie ({\bf X})] := \Lie([{\bf Y},{\bf X}])\ .
 $$
 \end{definition}

The following properties hold:
for ${\bf X}$, ${\bf Y}$ and ${\bf Z}$, multivector fields of degrees $i,j,k$, respectively, we have that:
 \ben
 \item
 $[{\bf X},{\bf Y}] = -(-1)^{(i+1)(j+1)} [{\bf Y},{\bf X}]$.
 \item
 $[{\bf X},{\bf Y}\wedge{\bf  Z}] =
 [{\bf X},{\bf Y}]\wedge{\bf Z} + (-1)^{(i+1)j} {\bf Y}\wedge [{\bf X},{\bf Z}]$.
 \item
 $(-1)^{(i+1)(k+1)}[{\bf X},[{\bf Y},{\bf Z}] ] +
 (-1)^{(j+1)(i+1)}[{\bf Y},[{\bf Z},{\bf X}] ] +(-1)^{(k+1)(j+1)}[{\bf Z},[{\bf X},{\bf Y}]]
 = 0 $.
\item
For every $X\in\vf({\cal M})$,
$\inn([X,{\bf Y}]) \Omega =
 \Lie (X)\inn({\bf Y}) \Omega -\inn({\bf Y})\Lie(X)\Omega$.
 \een

 \begin{definition}
An $m$-multivector field $\mathbf{X} \in \mathfrak{X}^m(\mathcal{M})$
is said to be {\rm locally decomposable} if,
for every $p\in \mathcal{M}$, there exists an open neighbourhood  $U_p\subset \mathcal{M}$
and $X_1,\ldots ,X_m\in\mathfrak{X} (U_p)$ such that $\mathbf{X}\vert_{U_p}=X_1\wedge\ldots\wedge X_m$.
 \end{definition}

An $m$-dimensional distribution $D\subset T\mathcal{M}$
is \emph{locally associated} with a non-vanishing 
$m$-multivector field $\mathbf{X}\in\mathfrak{X}^m(\mathcal{M})$
if there exists a connected open set $U\subseteq \mathcal{M}$
such that $\mathbf{X}\vert_U$ is a section of $\Lambda^mD\vert_U$.
If $\mathbf{X},\mathbf{X}'\in\mathfrak{X}^m(\mathcal{M})$ are non-vanishing $m$-multivector fields 
locally associated with the same distribution $D$, 
on the same set $U$, then there exists a
non-vanishing function $f \in C^\infty (U)$ such that $\mathbf{X}'\vert_U=f\mathbf{X}\vert_U$. This
defines an equivalence relation in the set of non-vanishing $m$-multivector fields in
$\mathcal{M}$, whose equivalence classes are denoted by $\{ \mathbf{X}\}_U$. Therefore, there is a
one-to-one correspondence between the set of $m$-dimensional orientable distributions $D$ in
$T \mathcal{M}$ and the set of the equivalence classes $\{ \mathbf{X}\}_\mathcal{M}$ of
non-vanishing, locally decomposable $m$-multivector fields in $\mathcal{M}$.
If $\mathbf{X}\in\mathfrak{X}^m(\mathcal{M})$ is non-vanishing and locally decomposable, and
$U\subseteq \mathcal{M}$, we denote by 
$\mathcal{D}_U(\mathbf{X})$ 
(or simply $\mathcal{D}(\mathbf{X})$, if $U=\mathcal{M}$)
the distribution associated with the class $\{ \mathbf{X}\}_U$.

 \begin{definition}
A non-vanishing, locally decomposable multivector field 
$\mathbf{X}\in\mathfrak{X}^m(\mathcal{M})$ is 
{\rm integrable} or {\rm involutive} if  its associated distribution
$\mathcal{D}_U(\mathbf{X})$ is integrable or involutive.
 \end{definition}

Obviously, if
$\mathbf{X}\in\mathfrak{X}^m(\mathcal{M})$ is integrable or involutive, 
then so is every other in its equivalence class 
$\{ \mathbf{X}\}$, and all of them have the same integral manifolds.

We are especially interested in the case where 
$\kappa\colon \mathcal{M}\to M$ is a fiber bundle and $M$ is an $m$-dimensional
orientable manifold with volume form $\eta\in\df^m(M)$.

 \begin{definition}
 A multivector field $\mathbf{X}\in\mathfrak{X}^m(\mathcal{M})$ is 
 \emph{$\kappa$-transverse} if, for every $\beta\in\Omega^m(M)$ with
$\beta (\kappa(y))\not= 0$, at every point
$y\in \mathcal{M}$, we have that  $(i (\mathbf{X})(\kappa^*\beta))_y\not= 0$. 
If $\mathbf{X}\in\mathfrak{X}^m(\mathcal{M})$ is
integrable, then it is       
$\kappa$-transverse if, and only if, its integral manifolds are local sections of
$\kappa\colon \mathcal{M}\to M$.
In this case, if $\psi\colon U\subset M\to \mathcal{M}$ is a local
section with $\psi (x)=y$ and $\psi (U)$ is the integral manifold 
of $\mathbf{X}$ at $y$, then  $T_y({\rm Im}\,\psi) = \mathcal{D}_y(\mathbf{X})$
and $\psi$ is said to be an {\rm integral section} of ${\bf X}$.
 \end{definition}

Furthermore, there exists a unique $m$-multivector field
${\bf Y}_\eta\colon M\to\Lambda^m\Tan\,M$, such that $\inn({\bf Y}_\eta)\eta=1$; 
then the {\sl canonical prolongation} of a section 
$\psi\colon U\subset M\to \mathcal{M}$ to $\Lambda^m\Tan{\cal M}$ is
the section $\Lambda^m\psi\colon U\subset M\to\Lambda^m\Tan\mathcal{M}$ 
defined as $\Lambda^m\psi:=\Lambda^m\Tan\psi\circ{\bf Y}_\eta$; where 
$\Lambda^m\Tan\psi\colon\Lambda^m\Tan M\to\Lambda^m\Tan\mathcal{M}$ is the natural extension of $\psi$ 
to the corresponding multitangent bundles. Then,
$\psi$ is an integral section of ${\bf X}\in\vf^m({\cal M})$ if, and only if,
\beq
{\bf X}\circ\psi=\Lambda^m\psi \ .
\label{insec}
\eeq


 \subsection{(Pre)multisymplectic systems}

Let  $\kappa\colon{\cal M}\to M$ be a fibred manifold 
which in what follows is assumed to be a fibre bundle, where
$\dim\,M=m\geq 1$ and $\dim\,{\cal M}=n+m$, and $M$ is an
orientable manifold with volume form $\eta\in\df^m(M)$.
 We denote
$\omega=\kappa^*\eta$. We write $(U;x^\mu,y^j)$,
$\mu=1,\ldots ,m$, $j=1,\ldots ,n$, for local charts of
coordinates in ${\cal M}$ adapted to the fibred structure, and such that
$\omega=\d x^1\wedge\ldots\wedge\d x^m\equiv\d^mx$.
We denote by $\vf^{V(\kappa)}({\cal M})$ the set of
$\kappa$-vertical vector fields in ${\cal M}$
(which is locally generated by $\displaystyle \left\{\derpar{}{y^j}\right\}$).

 \begin{definition}
A form $\Omega\in\df^{m+1}({\cal M})$ ($m\geq 1$) is a
{\rm multisymplectic form} if it is closed and $1$-nondegenerate, that is, if the map 
$\flat_{\Omega}\colon\Tan {\cal M} \longrightarrow \Lambda^m\Tan^*{\cal M}$,
defined by $\flat_{\Omega}(x,v)=(x,\inn(v)\Omega_x)$,
for every $x\in{\cal M}$ and $v\in\Tan_x{\cal M}$, is injective. 
In this case, the system described by the triad 
$(F,\Omega,\omega)$ is called a {\rm multisymplectic system}.
 Otherwise, the form is said to be a
{\rm premultisymplectic form}, and the system is
 {\rm premultisymplectic}.
Finally, a multisymplectic form is {\rm exact} if there exist
$\Theta\in\df^m({\cal M})$ such that $\Omega=-\d\Theta$.
 \end{definition}

From now on, we will assume this last condition 
(this does not represent a loss of generality since,
by Poincar\'e Lemma, every closed form is locally exact).

Furthermore, if $m\geq 2$, we assume that the following condition holds:
$$
\inn(Z_1)\inn(Z_2)\inn(Z_3)\Omega=0\ , \mbox{\rm for every
$Z_1,Z_2,Z_3\in\vf^{{\rm V}(\kappa)}({\cal M})$}\ ,
$$
which is justified because this is the situation in
the Lagrangian and Hamiltonian formalism of field theories.
This condition means that, in a chart of adapted coordinates, we have that
\beq
\Omega\vert_U=\d F_j^\mu\wedge\d y^j\wedge\d^{m-1}x_\mu+\d E\wedge\d^m x \ ,
\label{omega}
\eeq
where $\displaystyle\d^{m-1}x_\mu=\inn\left(\derpar{}{x^\mu}\right)\d^mx$,
and
$F_j^\mu(x^\nu,y^i),E(x^\nu,y^i)\in\Cinfty(U)$.


\subsection{Generalized variational principle and field equations}
\label{varprincuf}


Let $\Gamma(\kappa)$ be the set of sections of $\kappa$. Consider the following functional
(where the convergence of the integral is assumed)
\begin{equation*}
\begin{array}{rcl}
{\cal F} \colon \Gamma(\kappa) & \longrightarrow & \Real \\
\psi & \longmapsto & \displaystyle \int_M \psi^*\Theta
\end{array} \, .
\end{equation*}

\begin{definition}[Generalized Variational Principle]
The {\rm generalized variational problem} for the (pre)multisymplectic system $\hs$
is the search for the critical (local) sections of
the functional ${\cal F}$ with respect to the variations
of $\psi$ given by $\psi_s = \sigma_s \circ \psi$,
where $\left\{ \sigma_s \right\}$ is a local one-parameter group of
any compact-supported $\kappa$-vertical vector field $Z$ in ${\cal M}$; that is,
\begin{equation*}
\restric{\frac{d}{ds}}{s=0}\int_M \psi_s^*\Theta = 0 \, .
\end{equation*}
\end{definition}


\begin{theorem}
\label{thm:EquivalenceVariationalSectionsUnified}
The following assertions on a section $\psi \in \Gamma(\kappa)$ are equivalent:
\begin{enumerate}
\item $\psi$ is a solution to the generalized variational problem.
\item $\psi$ is a section solution to the equation
\begin{equation}
\label{sect1}
\psi^*\inn(Y)\Omega= 0 \, , \quad \text{for every }Y \in \vf({\cal M}) \, .
\end{equation}
\item $\psi$ is a section solution to the equation
\begin{equation}
\label{sect2}
\inn(\Lambda^m\psi)(\Omega\circ\psi) = 0\ .
\end{equation}
\item $\psi$ is an integral section of a $m$-multivector field
contained in a class of $\kappa$-transverse and integrable (and hence locally decomposable) $m$-multivector fields,
$\left\{ {\bf X} \right\} \subset \vf^m({\cal M})$,
satisfying the equation
\begin{equation}
\label{vf}
\inn({\bf X})\Omega=0 \, .
\end{equation}
\end{enumerate}
\end{theorem}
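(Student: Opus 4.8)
My plan is to establish the four-fold equivalence as $1\Leftrightarrow 2\Leftrightarrow 3$ together with $3\Leftrightarrow 4$, since each link is of a different nature: a variational computation, a pointwise tensorial reformulation, and an existence statement for multivector fields. For $1\Leftrightarrow 2$ I would first compute the first variation. Since $\psi_s=\sigma_s\circ\psi$ with $\sigma_s$ the flow of a compact-supported $\kappa$-vertical field $Z$, differentiation under the integral gives $\restric{\frac{d}{ds}}{s=0}\int_M\psi_s^*\Theta=\int_M\psi^*\Lie(Z)\Theta$. Using the Cartan identity $\Lie(Z)=\inn(Z)\d+\d\inn(Z)$ and $\Omega=-\d\Theta$, this equals $-\int_M\psi^*\inn(Z)\Omega+\int_M\d\bigl(\psi^*\inn(Z)\Theta\bigr)$, and the second integral vanishes by Stokes' theorem and the compact support of $Z$. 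Hence the first variation is $-\int_M\psi^*\inn(Z)\Omega$, which by the fundamental lemma of the calculus of variations vanishes for all such $Z$ precisely when $\psi^*\inn(Z)\Omega=0$ for every $\kappa$-vertical $Z$. To promote this from vertical fields to an arbitrary $Y$, I would use that $\psi^*\Omega=0$, being an $(m+1)$-form on the $m$-dimensional base, so that for any $Y$ that is $\psi$-related to a field $\xi$ on $M$ one has $\psi^*\inn(Y)\Omega=\inn(\xi)\psi^*\Omega=0$ automatically; splitting an arbitrary $Y$ along $\mathrm{Im}\,\psi$ into such a part plus a vertical part then yields (\ref{sect1}).

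The step $2\Leftrightarrow 3$ is a pointwise reformulation. Contracting the $m$-form $\psi^*\inn(Y)\Omega$ on $M$ with the distinguished $m$-vector $\mathbf{Y}_\eta$ (characterized by $\inn(\mathbf{Y}_\eta)\eta=1$) and using $\Lambda^m\psi=\Lambda^m\Tan\psi\circ\mathbf{Y}_\eta$, I would establish the identity, up to the sign $(-1)^m$, $\inn(\mathbf{Y}_\eta)\bigl(\psi^*\inn(Y)\Omega\bigr)=\bigl(\inn(\Lambda^m\psi)(\Omega\circ\psi)\bigr)(Y\circ\psi)$. Because $\mathbf{Y}_\eta$ is nowhere vanishing and an $m$-form on the $m$-manifold $M$ is determined by its value on $\mathbf{Y}_\eta$, the vanishing of $\psi^*\inn(Y)\Omega$ for every $Y$ is equivalent to the vanishing of the $1$-form $\inn(\Lambda^m\psi)(\Omega\circ\psi)$, that is, to (\ref{sect2}).

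Finally, $4\Rightarrow 3$ is immediate: if $\psi$ is an integral section of $\mathbf{X}$ then $\mathbf{X}\circ\psi=\Lambda^m\psi$ by (\ref{insec}), whence $\inn(\Lambda^m\psi)(\Omega\circ\psi)=(\inn(\mathbf{X})\Omega)\circ\psi=0$. The converse $3\Rightarrow 4$ is the genuine difficulty and the only place where one must produce an object rather than verify a pointwise identity. I would first define $\mathbf{X}$ along $\mathrm{Im}\,\psi$ by $\mathbf{X}\circ\psi:=\Lambda^m\psi$; this is non-vanishing, locally decomposable, $\kappa$-transverse (as $\psi$ is a section), integrable with $\mathrm{Im}\,\psi$ as integral manifold, and satisfies $\inn(\mathbf{X})\Omega=0$ on $\mathrm{Im}\,\psi$ thanks to (\ref{sect2}). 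The obstacle is to extend $\mathbf{X}$ to a neighbourhood in ${\cal M}$ while keeping $\inn(\mathbf{X})\Omega=0$ and transversality: in the multisymplectic case one invokes the existence of locally decomposable, $\kappa$-transverse solutions of (\ref{vf}), the structure (\ref{omega}) ensuring that they exist and that their integral sections recover $\psi$, whereas in the premultisymplectic case such solutions exist only on a constraint submanifold, so the extension---and hence the equivalence with (\ref{vf})---must be understood over the final constraint submanifold. This existence/extension step is where I expect the main work to lie.
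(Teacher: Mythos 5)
Your proof is correct, and on two of the three links it essentially coincides with the paper's: your variational computation (Cartan's identity, Stokes' theorem, then splitting an arbitrary $Y$ along ${\rm Im}\,\psi$ into a $\kappa$-vertical part plus a part $\psi$-related to a vector field on $M$, the latter killed by $\psi^*\Omega=0$) is exactly the paper's proof of $1\Leftrightarrow 2$, and your use of \eqref{insec} for $3\Leftrightarrow 4$ is also the paper's. The genuine divergence is in $2\Leftrightarrow 3$: the paper proves this step by a computation in adapted coordinates, expanding both \eqref{sect1} and \eqref{sect2} using the local expression \eqref{omega} of $\Omega$ and checking that they yield the same system of partial differential equations; you instead prove the intrinsic identity $\inn({\bf Y}_\eta)\bigl(\psi^*\inn(Y)\Omega\bigr)=\pm\bigl(\inn(\Lambda^m\psi)(\Omega\circ\psi)\bigr)(Y\circ\psi)$ and conclude from the facts that an $m$-form on the $m$-dimensional manifold $M$ vanishes if and only if it vanishes on the nowhere-zero multivector ${\bf Y}_\eta$, and that every tangent vector of ${\cal M}$ is the value of some vector field. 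Your route is coordinate-free and shows that this equivalence is purely algebraic, requiring neither the assumption \eqref{omega} nor the closedness of $\Omega$; the paper's computation, in exchange, exhibits the actual local field equations (the precise sign in your identity depends on the paper's convention $\inn(X_1\wedge\cdots\wedge X_m)\Omega=\inn(X_1)\cdots\inn(X_m)\Omega$, but only the vanishing matters, so this is immaterial). Finally, concerning $3\Rightarrow 4$: the extension problem you flag as ``the main work'' is real, but be aware that the paper does not solve it either --- its proof constructs ${\bf X}^x:=\Lambda^m(\psi\vert_{U_x})$ only along $\psi(U_x)$, tangent to it, verifies \eqref{vf} there via \eqref{insec}, and stops; item 4 is thus read in this local, along-the-section sense, with global existence of solutions deferred (see the Note after the theorem) to the constraint-algorithm setting in the premultisymplectic case. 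So your construction already attains the paper's level of completeness, and your cautionary remark about the premultisymplectic case is, if anything, more explicit than the paper's own treatment.
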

\begin{proof}
(The proof follows the patterns in \cite{art:Echeverria_DeLeon_Munoz_Roman07}  and\cite{proc:Garcia_Munoz83}).

($1\,\Longleftrightarrow\, 2$)\quad
Let $Z \in \vf^{V(\kappa)}({\cal M})$ be a compact-supported vector field,
and $U \subset M$ an open set such that $\partial U$ is a $(m-1)$-dimensional
manifold and $\kappa({\rm supp}(Z)) \subset U$. Then
\beann
\restric{\frac{d}{ds}}{s=0} \int_M \psi^*_s\Theta
&=& \restric{\frac{d}{ds}}{s=0} \int_U \psi^*_s\Theta
= \restric{\frac{d}{ds}}{s=0} \int_U \psi^*\sigma_s^*\Theta
\\
&=& \int_U\psi^*\left( \lim_{t\to0} \frac{\sigma_s^*\Theta- \Theta}{t} \right)
= \int_U\psi^*\Lie(Z)\Theta \\
&=& \int_U \psi^*(\inn(Z)\d \Theta+ \d\inn(Z)\Theta)=
\int_U \psi^*(-\inn(Z)\Omega+ \d\inn(Z)\Theta)\\
&=& - \int_U \psi^*\inn(Z)\Omega+ \int_U \d(\psi^*\inn(Z)\Theta) \\
&=& - \int_U \psi^*\inn(Z)\Omega+ \int_{\partial U}\psi^*\inn(Z)\Theta
= - \int_U\psi^*\inn(Z)\Omega \, ,
\eeann
as a consequence of Stoke's theorem and the assumptions made
on the supports of the vertical vector fields. Thus,
we conclude
$$
\restric{\frac{d}{ds}}{s=0} \int_M \psi_s^*\Theta = 0 \quad
\Longleftrightarrow \quad
\psi^*\inn(Z)\Omega=0 \ ,
$$
for every compact-supported $Z \in \vf^{V(\kappa)}({\cal M})$.
However, since the compact-supported vector fields generate locally
the $\Cinfty({\cal M})$-module of vector fields in ${\cal M}$,
it follows that the last equality holds for every $\kappa$-vertical
vector field $Z$ in ${\cal M}$.
Now, recall that for every point $p \in {\rm Im}\psi$, we have a canonical
splitting of the tangent space of ${\cal M}$ at $p$ in a $\kappa$-vertical
subspace and a $\kappa$-horizontal subspace,
$$
\Tan_p{\cal M} = V_p(\kappa) \oplus \Tan_p({\rm Im}\psi) \, .
$$
Then, if $Y \in \vf({\cal M})$ we have
$$
Y_p = (Y_p - \Tan_p(\psi \circ \kappa)(Y_p)) + 
\Tan_p(\psi \circ \kappa)(Y_p) \equiv Y_p^V + Y_p^{\psi} \, ,
$$
with $Y_p^V \in V_p(\kappa)$ and $Y_p^{\psi} \in \Tan_p({\rm Im}\psi)$. Therefore
$$
\psi^*\inn(Y)\Omega= \psi^*\inn(Y^V)\Omega+ \psi^*\inn(Y^{\psi})\Omega= 
\psi^*\inn(Y^{\psi})\Omega \, ,
$$
since $\psi^*\inn(Y^V)\Omega = 0$, by the conclusion in the above
paragraph. Now, as $Y^{\psi}_p \in \Tan_p({\rm Im}\psi)$
for every $p\in {\rm Im}\psi$,
then the vector field $Y^{\psi}$ is tangent to ${\rm Im}\psi$,
and hence there exists a vector field
$X \in \vf(M)$ such that $X$ is $\psi$-related with $Y^{\psi}$;
that is, $\psi_*X = \restric{Y^{\psi}}{{\rm Im}\psi}$. Then
$\psi^*\inn(Y^{\psi})\Omega = \inn(X)\psi^*\Omega$. However, as
$\dim{\rm Im}\psi=\dim M =m$ and
$\Omega$ is an $(m+1)$-form, we obtain that
$\psi^*\Omega=0$ and hence $\psi^*\inn(Y^{\psi})\Omega = 0$.
Therefore, we conclude that the equation \eqref{sect1} holds.

Taking into account the reasoning of the first paragraph,
the converse is obvious
since the equation \eqref{sect1} holds for every $Y \in \vf({\cal M})$ and,
in particular, for every $Z \in\vf^{V(\kappa)}({\cal M})$.

($2\,\Longleftrightarrow\, 3$)\quad
In a chart of adapted coordinates $(U; x^\mu, y^j)$, for every $Y\in \vf({\cal M})$ and  for every $\psi\in\Gamma(\kappa)$ and $x\in M$,
we have that
$\displaystyle Y=f^\mu\derpar{}{x^\mu}+g^j\derpar{}{y^j}$,
$\psi(x)=(x^\mu,\psi^j(x))$ and 
$\displaystyle\Lambda^m\psi=\bigwedge_{\mu=1}^m\left(\derpar{}{x^\mu}+\derpar{\psi^j}{x^\mu}\derpar{}{y^j}\right)$.
Therefore, taking \eqref{omega} into account,
a simple calculation shows that 
equations \eqref{sect1} and \eqref{sect2} lead to the same expressions:
\beann
0 &=&\derpar{F^\mu_j}{x^\mu}\derpar{\psi^j}{x^\nu}+
\derpar{F^\mu_j}{y^i}\left(\derpar{\psi^i}{x^\nu}\derpar{\psi^j}{x^\mu}+\derpar{\psi^i}{x^\mu}\derpar{\psi^j}{x^\nu} \right)\ ,
\\
0 &=&\derpar{F^\mu_j}{x^\mu}+\derpar{F^\mu_i}{y^j}\derpar{\psi^i}{x^\mu}-
\derpar{F^\mu_j}{y^i}\derpar{\psi^i}{x^\mu}+\derpar{E}{y^j} \ .
\eeann
($3\,\Longleftrightarrow\, 4$)\quad
 If $\psi\colon U\subset M \to{\cal M}$ is a solution to \eqref{sect2} then,
for every $x\in U$ there exists a neigbourhood $U_x\subset U$ of $x$
such that $\psi(U_x)\subset \psi(U)$.
As $\psi\vert_{U_x}$ is an injective immersion
(since $\psi$ is a section and hence its image is an embedded submanifold),
the map $\Lambda^m(\psi\vert_{U_x})$ defines 
a locally decomposable $m$-multivector field ${\bf X}^x$
in $\psi(U_x)\subset M$, which is tangent to $\psi(U_x)$
and has $\psi\vert_{U_x}$ as an integral section in $U_x$. 
Thus, as a consequence of \eqref{insec}, if equation \eqref{sect2} holds for $\psi\vert_{U_x}$, 
then \eqref{vf} holds for ${\bf X}^x$, in $U_x$.

Conversely, if $\psi$ is an integral section of an $m$-multivector field
${\bf X}$ in $U\subset M$, then \eqref{insec} holds, and 
if \eqref{vf} holds for ${\bf X}$, then \eqref{sect2} holds for $\psi$, in $U$.
\end{proof}

\begin{remark}
The equation \eqref{vf}, with the $\kappa$-transverse condition, can be written
\beq
\inn({\bf X})\Omega=0 \quad ; \quad
\inn({\bf X})\omega\neq 0 \ .
 \label{fundeqs}
\eeq
Then, it is usual to fix the $\kappa$-transverse condition by taking
a representative in the class $\{{\bf X}\}$ such that
$$
\inn({\bf X})\omega=1\ .
$$
\end{remark}

As it is usual,
$\ker^m\Omega:=\{{\bf X}\in\vf^m({\cal M})\,\vert\,  \inn({\bf X})\Omega=0\}$. 
We denote by $\ker^m_\omega\Omega\subset\vf^m({\cal M})$
the set of $m$-multivector fields satisfying equations \eqref{fundeqs},
but not being locally decomposable necessarily. Then
$\ker^m_{\omega(ld)}\Omega\subset\vf^m({\cal M})$ 
and $\ker^m_{\omega(I)}\Omega\subset\vf^m({\cal M})$
denote the sets of $m$-multivector fields satisfying equations \eqref{fundeqs}
which are locally decomposable and integrable, respectively.
Obviously we have that
 \beq
\ker^m_{\omega(I)}\Omega\subset
\ker^m_{\omega(ld)}\Omega\subset\ker^m_\omega\Omega
 \subset\ker^m\Omega\ .
\label{chain}
 \eeq

\noindent {\bf Note}:
In general, if $({\cal M},\Omega,\omega)$ is a premultisymplectic system,
then $\kappa$-transverse and integrable $m$-multivector fields 
${\bf X}\in\vf^m({\cal M})$ which are  solutions to \eqref{fundeqs} 
could not exist. In the best of cases they exist only in some submanifold
$\j_{\cal S}\colon{\cal S}\hookrightarrow{\cal M}$ \cite{LMMMR-2005}.
In this case, in the sets of \eqref{chain} and in the following sections,
we have to consider only multivector fields and vector fields
which are tangent to ${\cal S}$.


\section{Symmetries and conservation laws for multisymplectic systems}

\subsection{Conserved quantities and conservation laws}
\label{symconlaws}

 Next we recover the idea of {\sl conservation law} or {\sl conserved quantity}, and state
 Noether's theorem for (pre)multisymplectic systems.
 In this sense, a part of our discussion is a generalization of
 the results obtained for non-autonomous mechanical systems
and field theories
(see \cite{LM-96,art:deLeon_etal2004,EMR-99b,SC-81}, and references therein).

 \begin{definition}
 A  {\rm conserved quantity}
 of a (pre)multisymplectic system $\hs$
 is a form $\xi\in\df^{m-1}({\cal M})$ such that
 $\Lie({\bf X})\xi=0$, for every
 ${\bf X}\in\ker^m_\omega\Omega$.
 \end{definition}

 Observe that, in this case,
 $\Lie({\bf X})\xi=(-1)^{m+1}\inn({\bf X})\d\xi$.

 \begin{prop}
\label{difxi}
 If $\xi\in\df^{m-1}({\cal M})$ is a
 first integral of a (pre)multisymplectic system $\hs$, and
 ${\bf X}\in\ker^m_{\omega(I)}\Omega$,
 then $\xi$ is closed on the integral submanifolds of ${\bf X}$;
 that is, if $j_S\colon S\hookrightarrow {\cal M}$
 is an integral submanifold of ${\bf X}$, then $\d j_S^*\xi=0$.
 \end{prop}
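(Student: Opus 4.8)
The plan is to reduce the statement about $\d j_S^*\xi$ to the purely algebraic contraction $\inn({\bf X})\d\xi$, which the conserved-quantity hypothesis forces to vanish. First I would record what being a first integral gives here. Since $\xi\in\df^{m-1}({\cal M})$ and ${\bf X}\in\vf^m({\cal M})$, the contraction $\inn({\bf X})\xi$ is zero because the degree of the form is strictly smaller than the order of the multivector field; hence, exactly as in the Remark preceding the statement, $\Lie({\bf X})\xi=\d\inn({\bf X})\xi-(-1)^m\inn({\bf X})\d\xi=(-1)^{m+1}\inn({\bf X})\d\xi$. Because ${\bf X}\in\ker^m_{\omega(I)}\Omega\subset\ker^m_\omega\Omega$ by the chain \eqref{chain}, the definition of conserved quantity yields $\Lie({\bf X})\xi=0$, and therefore
$$
\inn({\bf X})\d\xi=0 \, .
$$

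Next I would pass from this identity on ${\cal M}$ to the pullback onto $S$. Since the pullback commutes with the exterior differential, it suffices to prove $j_S^*\d\xi=0$. As ${\bf X}$ is integrable and locally decomposable, around each $p\in S$ I would write ${\bf X}=X_1\wedge\ldots\wedge X_m$ with $X_1,\ldots,X_m$ linearly independent, so that the associated distribution $\mathcal{D}_p({\bf X})$ is spanned by $X_1(p),\ldots,X_m(p)$. Because $S$ is an integral submanifold of ${\bf X}$, one has $\Tan_pS=\mathcal{D}_p({\bf X})$, hence the $X_i(p)$ form a basis of $\Tan_pS$. The $m$-form $j_S^*\d\xi$ is of top degree on the $m$-dimensional manifold $S$, so it is determined at $p$ by its value on this basis, namely
$$
(\d\xi)_p\bigl(X_1(p),\ldots,X_m(p)\bigr)=\bigl(\inn({\bf X})\d\xi\bigr)(p)=0 \, .
$$
By multilinearity and skew-symmetry this forces $j_S^*\d\xi$ to vanish at every point of $S$, and thus $\d j_S^*\xi=j_S^*\d\xi=0$.

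The only point requiring care---and the main obstacle---is the transition from the algebraic vanishing of the function $\inn({\bf X})\d\xi$ to the geometric statement $j_S^*\d\xi=0$: one must use that the local decomposition of ${\bf X}$ furnishes a genuine pointwise basis of $\Tan_pS$, which holds precisely because ${\bf X}$ is non-vanishing and locally decomposable (so its factors are independent and match the $m$-dimensional tangent space of the integral manifold). A possible multiplicative function relating different representatives of the class $\{{\bf X}\}$ is harmless, since it does not affect the vanishing of the contraction. Once this identification of $\Tan_pS$ with $\mathcal{D}_p({\bf X})$ is in place, the conclusion is immediate.
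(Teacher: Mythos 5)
Your proof is correct and follows essentially the same route as the paper's: both reduce the claim to $\inn({\bf X})\d\xi=0$ (via $\Lie({\bf X})\xi=(-1)^{m+1}\inn({\bf X})\d\xi$ and the inclusion $\ker^m_{\omega(I)}\Omega\subset\ker^m_\omega\Omega$), and then evaluate $\d\xi$ on a local decomposition of ${\bf X}$ whose factors span the tangent space of the integral submanifold $S$. The paper phrases this by choosing independent vector fields tangent to $S$ and writing ${\bf X}=fX_1\wedge\ldots\wedge X_m$, while you start from the decomposition of ${\bf X}$ itself; the two formulations are equivalent, and your remark that the multiplicative factor is harmless covers the only cosmetic difference.
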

 \begin{proof}
 Let $X_1,\ldots ,X_m\in\vf ({\cal M})$ be independent vector fields
 tangent to the ($m$-dimensional) integral submanifold $S$. Then
 ${\bf X}=fX_1\wedge\ldots\wedge X_m$, for some $f\in\Cinfty ({\cal M})$.
 Therefore, as $\inn({\bf X})\d\xi=0$, we have
 $$
 j_S^*[\d\xi (X_1,\ldots ,X_m)]=
 j_S^*\inn(X_1\wedge\ldots\wedge X_m)\d\xi=0 \ .
 $$
 \end{proof}

 \begin{theorem}
 A form $\xi\in\df^{m-1}({\cal M})$
 is a conserved quantity of a (pre)multisymplectic system $\hs$ if, and only if,
 $\Lie({\bf Z})\xi=0$, for every
 ${\bf Z}\in\ker^m\Omega$.
 \label{fintchar}
 \end{theorem}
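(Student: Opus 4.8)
The plan is to prove the theorem by establishing both implications, noting that the nontrivial direction is showing that a conserved quantity (defined via $\ker^m_\omega\Omega$) automatically satisfies $\Lie(\mathbf{Z})\xi=0$ for \emph{all} $\mathbf{Z}\in\ker^m\Omega$. One implication is immediate: since $\ker^m_\omega\Omega\subset\ker^m\Omega$ by the chain of inclusions in \eqref{chain}, if $\Lie(\mathbf{Z})\xi=0$ holds for every $\mathbf{Z}\in\ker^m\Omega$, then in particular it holds for every $\mathbf{X}\in\ker^m_\omega\Omega$, which is exactly the definition of a conserved quantity.

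For the converse, the key observation to exploit is the identity recorded just after the definition of conserved quantity, namely $\Lie(\mathbf{X})\xi=(-1)^{m+1}\inn(\mathbf{X})\d\xi$, which holds for any $m$-multivector field (it follows from the graded-bracket definition of $\Lie$ together with $\inn(\mathbf{X})$ annihilating the $(m-1)$-form $\xi$ on the side where the form-degree is too low). Thus the condition $\Lie(\mathbf{X})\xi=0$ is equivalent to $\inn(\mathbf{X})\d\xi=0$, and the whole statement reduces to a purely algebraic claim about contraction: \emph{if $\inn(\mathbf{X})\d\xi=0$ for every $\mathbf{X}\in\ker^m_\omega\Omega$, then $\inn(\mathbf{Z})\d\xi=0$ for every $\mathbf{Z}\in\ker^m\Omega$.} First I would reduce to the case where $\mathbf{Z}$ is locally decomposable and then handle a general element by linearity, since an arbitrary multivector field is locally a sum of decomposable terms; in fact it suffices to check the equation pointwise and on decomposable multivectors $\mathbf{Z}=Z_1\wedge\cdots\wedge Z_m$.

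The crux is then to relate an arbitrary element of $\ker^m\Omega$ to elements of $\ker^m_\omega\Omega$, i.e.\ to elements that are additionally $\kappa$-transverse ($\inn(\mathbf{X})\omega\neq 0$). The natural strategy is a perturbation/density argument: given $\mathbf{Z}\in\ker^m\Omega$ at a point $p$, I would show that $\mathbf{Z}$ can be written as a linear combination of $\kappa$-transverse elements of $\ker^m\Omega$, or approximated by them, so that the vanishing of $\inn(\cdot)\d\xi$ on the transverse elements forces it on $\mathbf{Z}$. Concretely, if there exists at least one $\mathbf{X}_0\in\ker^m_\omega\Omega$ (a transverse solution, which is assumed to exist by the framework), then for $\mathbf{Z}\in\ker^m\Omega$ the combination $\mathbf{X}_0+t\mathbf{Z}$ lies in $\ker^m\Omega$ for all $t$ (since $\ker^m\Omega$ is a linear subspace) and remains $\kappa$-transverse for small $t$ by continuity of $\inn(\,\cdot\,)\omega$; hence each $\mathbf{X}_0+t\mathbf{Z}\in\ker^m_\omega\Omega$, so $\inn(\mathbf{X}_0+t\mathbf{Z})\d\xi=0$, and subtracting $\inn(\mathbf{X}_0)\d\xi=0$ and dividing by $t$ yields $\inn(\mathbf{Z})\d\xi=0$.

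I expect the main obstacle to be the transversality bookkeeping in this perturbation step: one must ensure that the elements used genuinely satisfy the strict condition $\inn(\mathbf{X})\omega\neq 0$ defining $\ker^m_\omega\Omega$ rather than merely $\inn(\mathbf{X})\Omega=0$, and that this can be arranged locally at every point where $\d\xi$ is being tested. A secondary technical point is the premultisymplectic case flagged in the Note after \eqref{chain}: when transverse integrable solutions exist only on a submanifold $\j_{\cal S}\colon{\cal S}\hookrightarrow{\cal M}$, the argument must be carried out with all multivector fields taken tangent to ${\cal S}$, so that both the hypothesis and the conclusion are understood relative to $\ker^m\Omega$ restricted to ${\cal S}$. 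Once the transversality perturbation is justified, the equivalence follows immediately from the $\Lie(\mathbf{X})\xi=(-1)^{m+1}\inn(\mathbf{X})\d\xi$ identity applied in both directions.
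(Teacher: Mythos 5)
Your easy direction (via the inclusion $\ker^m_\omega\Omega\subset\ker^m\Omega$) and your reduction of the statement to the contraction identity $\inn({\bf X})\d\xi=0$ are correct, and your overall strategy---fix a particular transverse solution ${\bf X}_0\in\ker^m_\omega\Omega$ and use linearity of $\ker^m\Omega$ to reach a general ${\bf Z}$---is the same as the paper's. The gap is in the pivotal perturbation step. The set $\ker^m_\omega\Omega$ consists of \emph{globally defined} multivector fields whose contraction with $\omega$ vanishes \emph{nowhere}, and the conserved-quantity hypothesis can be applied only to such global objects. For a fixed constant $t\neq 0$, the field ${\bf X}_0+t{\bf Z}$ is transverse only where $\inn({\bf X}_0)\omega+t\,\inn({\bf Z})\omega\neq 0$; since $\ker^m\Omega$ is a $\Cinfty({\cal M})$-module, the function $\inn({\bf Z})\omega$ can be unbounded relative to $\inn({\bf X}_0)\omega$ (replace ${\bf Z}$ by $h{\bf Z}$ with $h$ unbounded), and then on a non-compact ${\cal M}$ there may be \emph{no} $t\neq 0$ at all for which ${\bf X}_0+t{\bf Z}\in\ker^m_\omega\Omega$: if $\inn({\bf X}_0)\omega=1$ and $\inn({\bf Z})\omega$ takes every real value, then $1+t\,\inn({\bf Z})\omega$ vanishes somewhere for every $t\neq 0$. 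Your fallback of arranging transversality ``locally at every point where $\d\xi$ is being tested'' does not close this: a field that is transverse only near a point is still not an element of $\ker^m_\omega\Omega$, so the hypothesis says nothing about it. (Your preliminary reduction to decomposable ${\bf Z}$ is also unjustified, since the decomposable summands of an element of $\ker^m\Omega$ need not themselves lie in $\ker^m\Omega$; but this is harmless, as your perturbation argument never uses it.)

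The missing idea---and what the paper actually does---is to make the correction to ${\bf X}_0$ \emph{exactly} $\omega$-traceless instead of merely small: normalize $\inn({\bf X}_0)\omega=1$ and split ${\bf Z}=\left({\bf Z}-(\inn({\bf Z})\omega)\,{\bf X}_0\right)+(\inn({\bf Z})\omega)\,{\bf X}_0$. The first summand lies in $\ker^m\Omega\cap\ker^m\omega$, so ${\bf X}_0+\left({\bf Z}-(\inn({\bf Z})\omega)\,{\bf X}_0\right)$ has the \emph{same} contraction with $\omega$ as ${\bf X}_0$ and hence lies in $\ker^m_\omega\Omega$ with no smallness requirement; your ``difference of two transverse solutions'' computation then gives $\Lie\left({\bf Z}-(\inn({\bf Z})\omega)\,{\bf X}_0\right)\xi=0$. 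The second summand is a \emph{function} multiple of ${\bf X}_0$, and here one needs an observation absent from your proposal: since $\Lie$ is not $\Cinfty$-linear in its multivector argument, one must compute, for $g=\inn({\bf Z})\omega$, that $\Lie(g{\bf X}_0)\xi=\d\left(g\,\inn({\bf X}_0)\xi\right)+(-1)^{m+1}g\,\inn({\bf X}_0)\d\xi=(-1)^{m+1}g\,\inn({\bf X}_0)\d\xi=0$, using that $\inn({\bf X}_0)\xi=0$ for degree reasons ($\xi$ is an $(m-1)$-form and ${\bf X}_0$ an $m$-multivector field) and that $\inn({\bf X}_0)\d\xi=0$ because $\xi$ is a conserved quantity. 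This same lemma is what you would need anyway to repair your own version, e.g.\ by replacing the constant $t$ with the bounded, nowhere-vanishing function $f=\big(1+(\inn({\bf Z})\omega)^2\big)^{-1/2}$, for which ${\bf X}_0+f{\bf Z}\in\ker^m_\omega\Omega$ genuinely holds.
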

 \begin{proof}
 Let $\xi$ be a conserved quantity. 
If ${\bf X}_0\in\ker^m_\omega\Omega$
is a particular solution to the equations (\ref{fundeqs}) 
then
 $$
 \ker^m_\omega\Omega =
 \{f{\bf X}_0+\ker^m\,\Omega\cap \ker^m\omega \ ; \
 f\in\Cinfty({\cal M})\} \ .
 $$
 Therefore, for every ${\bf Z}\in\ker^m\Omega\cap \ker^m\omega$, we have that
 ${\bf Z}={\bf X}_1-{\bf X}_2$, with ${\bf X}_1,{\bf X}_2\in\ker^m_\omega\Omega$
 such that  $\inn({\bf X}_1)\omega=\inn({\bf X}_2)\omega$.
 Hence, if $\xi$ is a conserved quantity, we have that  $\Lie({\bf Z})\xi=0$.
 Furthermore, taking ${\bf X}_0\in\ker^m_\omega\Omega$
with $\inn({\bf X}_0)\omega=1$,
 for every ${\bf Z}\in\ker^m\Omega$
 we can write
 $ {\bf Z}=({\bf Z}-\inn({\bf Z})\omega{\bf X}_0)+
 \inn({\bf Z})\omega{\bf X}_0$
and it follows that
${\bf Z}-\inn({\bf Z})\omega{\bf X}_0\in \ker^m\Omega\cap\ker^m\omega$;
therefore $ \Lie({\bf Z}-\inn({\bf Z})\omega{\bf X}_0)\xi=0$ and thus
 $$
 \Lie({\bf Z})\xi=
 \Lie({\bf Z}-\inn({\bf Z})\omega{\bf X}_0)\xi+
 \Lie(\inn({\bf Z})\omega{\bf X}_0)\xi=
 (-1)^{m+1}\inn({\bf Z})\omega\inn({\bf X}_0)\d\xi=0 \ ,
 $$
 since $\d\inn({\bf X}_0)\xi=0$, because
 $\xi\in\df^{m-1}({\cal M})$.

 The converse is immediate.
 \end{proof}

Now, given $\xi\in\df^{m-1}({\cal M})$ and ${\bf X}\in\vf^m({\cal M})$,
for every integral submanifold $\psi\colon M\to{\cal M}$
of ${\bf X}$, we can construct the form 
$\psi^*\xi\in\df^{m-1}(M)$. Then, 
using the volume form $\eta\in\df^m(M)$, we can obtain a unique
$X_{\psi^*\xi}\in\vf(M)$ such that
$$
\inn(X_{\psi^*\xi})\eta=\psi^*\xi \ ,
$$
(in the standard terminology, $\psi^*\xi $ is the so-called
{\sl form of flux} associated with the vector field $X_{\psi^*\xi}$). Then:

\begin{prop}
If ${\rm div}X_{\psi^*\xi}$ denotes the divergence of $X_{\psi^*\xi}$, we have that
$$
({\rm div}X_{\psi^*\xi})\,\eta=\d{\psi^*\xi} \ .
$$
\end{prop}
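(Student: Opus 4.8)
The plan is to reduce the identity to the standard characterization of the divergence of a vector field with respect to the volume form $\eta$, and then apply Cartan's formula. Recall that the divergence of a vector field $X\in\vf(M)$ with respect to $\eta$ is defined by the relation $\Lie(X)\eta=({\rm div}\,X)\,\eta$; this makes sense because $\Lie(X)\eta$ is again an $m$-form on the $m$-dimensional manifold $M$, hence a $\Cinfty(M)$-multiple of the nowhere-vanishing form $\eta$, and that multiple is by definition ${\rm div}\,X$.

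First I would expand $\Lie(X_{\psi^*\xi})\eta$ by means of Cartan's magic formula,
$$
\Lie(X_{\psi^*\xi})\eta=\d\inn(X_{\psi^*\xi})\eta+\inn(X_{\psi^*\xi})\d\eta \ .
$$
Since $\eta\in\df^m(M)$ is a top-degree form on the $m$-dimensional manifold $M$, its exterior derivative $\d\eta\in\df^{m+1}(M)$ vanishes identically for dimensional reasons; hence the second summand drops out and we are left with $\Lie(X_{\psi^*\xi})\eta=\d\inn(X_{\psi^*\xi})\eta$.

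Finally I would substitute the defining relation $\inn(X_{\psi^*\xi})\eta=\psi^*\xi$ of the vector field $X_{\psi^*\xi}$, which yields
$$
({\rm div}\,X_{\psi^*\xi})\,\eta=\Lie(X_{\psi^*\xi})\eta=\d\inn(X_{\psi^*\xi})\eta=\d(\psi^*\xi) \ ,
$$
precisely the asserted identity. There is no genuine obstacle here: the argument is a direct chain of standard identities, the only point requiring a moment of care being the observation that $\d\eta=0$ on $M$, which is exactly what allows Cartan's formula to collapse the Lie derivative to a single exact term.
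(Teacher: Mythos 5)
Your proof is correct and follows essentially the same route as the paper: both reduce the claim to the definition $\Lie(X_{\psi^*\xi})\eta=({\rm div}X_{\psi^*\xi})\,\eta$, apply Cartan's formula, use $\d\eta=0$ (top degree), and substitute $\inn(X_{\psi^*\xi})\eta=\psi^*\xi$. The paper compresses this into a single chain of equalities (written in the reverse direction, and with a typographical slip omitting the $\d$ in front of $\inn(X_{\psi^*\xi})\eta$), whereas you spell out the intermediate steps explicitly; the content is identical.
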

\begin{proof}
In fact, \
$\d{\psi^*\xi}=\inn(X_{\psi^*\xi})\eta=
\Lie(X_{\psi^*\xi})\eta=({\rm div}X_{\psi^*\xi})\,\eta$\ .
\end{proof}

As a consequence of Proposition \ref{difxi},
this result allows to associate 
a {\sl conservation law} in $M$ to every conserved quantity in ${\cal M}$. In fact:

\begin{prop}
$\xi\in\df^{m-1}({\cal M})$ is a conserved quantity if, 
and only if, ${\rm div}X_{\psi^*\xi}=0$,
for every integral submanifold $\psi\colon M\to{\cal M}$
of ${\bf X}$.
Therefore, by {\sl Stokes theorem}, in every bounded domain $U\subset M$,
we have
$$
\int_{\partial U}{\psi^*\xi}=\int_U ({\rm div}X_{\psi^*\xi})\,\eta=\int_U\d{\psi^*\xi}=0 \ .
$$
The form $\psi^*\xi$ is called the {\rm current} associated with the conserved quantity $\xi$.
\end{prop}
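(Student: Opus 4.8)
The plan is to prove the biconditional by chaining together the two preceding propositions, which have already done the analytic heavy lifting. The final statement asserts that $\xi\in\df^{m-1}({\cal M})$ is a conserved quantity if and only if ${\rm div}X_{\psi^*\xi}=0$ for every integral section $\psi$ of every ${\bf X}\in\ker^m_\omega\Omega$. The key observation is that the divergence of the flux vector field $X_{\psi^*\xi}$ is completely controlled by $\d\psi^*\xi$ via the identity $({\rm div}X_{\psi^*\xi})\,\eta=\d\psi^*\xi$ established in the immediately preceding proposition, so the whole statement reduces to showing that $\xi$ is conserved if and only if $\d\psi^*\xi=0$ for all such integral sections.

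First I would establish the forward direction. Suppose $\xi$ is a conserved quantity. By the definition of conserved quantity we have $\Lie({\bf X})\xi=0$ for every ${\bf X}\in\ker^m_\omega\Omega$, and as noted just after that definition this is equivalent to $\inn({\bf X})\d\xi=0$. Proposition~\ref{difxi} then applies verbatim: for any integral submanifold $j_S\colon S\hookrightarrow{\cal M}$ of an integrable ${\bf X}$ we obtain $\d j_S^*\xi=0$. Taking $S={\rm Im}\,\psi$ and pulling back along $\psi$ gives $\d\psi^*\xi=0$, and hence by the preceding proposition ${\rm div}X_{\psi^*\xi}=0$. The Stokes computation displayed in the statement is then just the integral form of this pointwise vanishing and requires no separate argument.

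For the converse, I would argue that if ${\rm div}X_{\psi^*\xi}=0$ for every integral section $\psi$ of every ${\bf X}\in\ker^m_\omega\Omega$, then $\d\psi^*\xi=0$ for all such $\psi$, again by the divergence identity. The point is that through every $p\in{\cal M}$ one can realise an integral section of some locally decomposable, integrable, $\kappa$-transverse ${\bf X}$ solving \eqref{fundeqs}, whose tangent distribution ${\cal D}_p({\bf X})$ recovers all the relevant $m$-dimensional directions; pulling back $\d\xi$ to such sections and using $\inn({\bf X})\d\xi=(-1)^{m+1}\Lie({\bf X})\xi$ together with the one-to-one correspondence between classes $\{{\bf X}\}$ and their distributions forces $\Lie({\bf X})\xi=0$ pointwise. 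Combined with Theorem~\ref{fintchar}, which extends vanishing on $\ker^m_\omega\Omega$ to all of $\ker^m\Omega$, this yields that $\xi$ is a conserved quantity.

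The main obstacle is the converse direction, specifically the passage from \emph{integral-section} information ($\d\psi^*\xi=0$, a statement living on $M$ pulled back through $\psi$) back to the \emph{ambient} condition $\inn({\bf X})\d\xi=0$ on ${\cal M}$. This requires that the integral submanifolds of the admissible ${\bf X}$ sweep out enough of the tangent bundle of ${\cal M}$ to detect the full contraction $\inn({\bf X})\d\xi$; here one leans on integrability (so that such sections genuinely exist by Theorem~\ref{thm:EquivalenceVariationalSectionsUnified}(4) and equation \eqref{insec}) and on the fact that $\d\xi(X_1,\dots,X_m)$ evaluated on a tangent frame of the integral manifold equals $j_S^*\inn(X_1\wedge\dots\wedge X_m)\d\xi$, exactly as exploited in Proposition~\ref{difxi}. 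The regularity caveat from the premultisymplectic Note applies: in the premultisymplectic case all multivector fields must be taken tangent to the constraint submanifold ${\cal S}$, so the equivalence should be read as holding on ${\cal S}$.
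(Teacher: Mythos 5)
Your forward implication is exactly the paper's own (implicit) argument: the paper gives no separate proof of this proposition, presenting it as a consequence of Proposition \ref{difxi} combined with the identity $({\rm div}X_{\psi^*\xi})\,\eta=\d\psi^*\xi$ from the preceding proposition, and your chaining of those two results (with Stokes' theorem supplying the displayed integrals) is correct and is all that the paper intends for the ``only if'' part.

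The converse, which the paper silently folds into the ``if and only if'', is where your proposal has a genuine gap, precisely at the step you yourself flag as the main obstacle. From ${\rm div}X_{\psi^*\xi}=0$ you correctly recover $\d\psi^*\xi=0$, and hence the vanishing of $\inn({\bf X})\d\xi$ along integral submanifolds of \emph{integrable} solutions; at best this yields $\Lie({\bf X})\xi=0$ for every ${\bf X}\in\ker^m_{\omega(I)}\Omega$, and even that requires an integral submanifold through every point of ${\cal M}$, an existence statement which the paper's own Note warns may fail. But the definition of conserved quantity quantifies over all of $\ker^m_\omega\Omega$, which by \eqref{chain} is strictly larger: it contains solutions that are neither locally decomposable nor integrable, and such multivector fields have no integral submanifolds at all, so no amount of integral-section data constrains $\inn({\bf X})\d\xi$ for them. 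Your claim that the tangent distributions ``recover all the relevant $m$-dimensional directions'' only produces decomposable $m$-vectors tangent to actual solutions, and nothing guarantees that these span, at each point, the values of arbitrary elements of $\ker^m_\omega\Omega$. Theorem \ref{fintchar} cannot close this gap either: it passes from $\ker^m_\omega\Omega$ to the larger set $\ker^m\Omega$, whereas what you need is the passage from the smaller set $\ker^m_{\omega(I)}\Omega$ up to $\ker^m_\omega\Omega$, which is a different (and in general false) implication. To make the converse rigorous one must either read the proposition with the quantifier in the definition of conserved quantity restricted to integrable solutions, or add as a hypothesis that integrable solutions exist through every point and generate $\ker^m_\omega\Omega$ pointwise.
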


\subsection{Symmetries}
\label{syms}

 \begin{definition}
\ben
\item
A {\rm symmetry} of a (pre)multisymplectic system 
$\hs$ is a diffeomorphism $\Phi\colon{\cal M}
\to{\cal M}$
such that $\Phi_*(\ker^m\Omega)\subset\ker^m\Omega$.
%
\item
 An {\rm infinitesimal symmetry}
 of a (pre)multisymplectic system $\hs$ is a vector field $Y\in\vf ({\cal M})$
 whose local flows are local symmetries; that is,
if $F_t$ is a local flow of $Y$, then
$F_{t*}(\ker^m\Omega)\subset\ker^m\Omega$,
in the corresponding open sets.
\een
 \end{definition}

Another characterizacion of infinitesimal symmetries is the following:

 \begin{theorem}
 Let $\hs$ be a (pre)multisymplectic system, $Y\in\vf({\cal M})$.
 Then $Y$ is an infinitesimal symmetry if, and only if,
$$
[Y,\ker^m\Omega]\subset\ker^m\Omega \ .
$$
 \label{previo0}
 \end{theorem}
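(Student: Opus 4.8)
The engine of the whole proof is the fourth bracket property listed above, applied with the vector field $Y$ and an $m$-multivector field ${\bf X}$:
$$
\inn([Y,{\bf X}])\Omega=\Lie(Y)\inn({\bf X})\Omega-\inn({\bf X})\Lie(Y)\Omega \ .
$$
I will combine it with two standard facts about a local flow $F_t$ of $Y$: that $\Lie(Y){\bf X}=[Y,{\bf X}]=-\restric{\frac{d}{dt}}{t=0}F_{t*}{\bf X}$, and that contraction is natural, $\inn(F_{t*}{\bf X})\Omega=(F_{-t})^*\big(\inn({\bf X})F_t^*\Omega\big)$, so that in particular $F_{t*}{\bf X}\in\ker^m\Omega$ if and only if $\inn({\bf X})F_t^*\Omega=0$.

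For the implication ``infinitesimal symmetry $\Rightarrow$ bracket condition'' I would simply differentiate. If $Y$ is an infinitesimal symmetry then $F_{t*}{\bf X}\in\ker^m\Omega$, i.e. $\inn(F_{t*}{\bf X})\Omega=0$, for every ${\bf X}\in\ker^m\Omega$ and every small $t$. Evaluating $\restric{\frac{d}{dt}}{t=0}$ and using the $\Cinfty({\cal M})$-linearity of the contraction, $\inn\big(\restric{\frac{d}{dt}}{t=0}F_{t*}{\bf X}\big)\Omega=-\inn([Y,{\bf X}])\Omega=0$, so $[Y,{\bf X}]\in\ker^m\Omega$; as ${\bf X}$ is arbitrary, $[Y,\ker^m\Omega]\subset\ker^m\Omega$. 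This direction is routine.

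The converse is the substantive one. Substituting $\inn({\bf X})\Omega=0$ into the bracket identity shows that the hypothesis $[Y,\ker^m\Omega]\subset\ker^m\Omega$ is equivalent to $\inn({\bf X})\Lie(Y)\Omega=0$ for all ${\bf X}\in\ker^m\Omega$; that is, $\Lie(Y)\Omega$ annihilates $\ker^m\Omega$ under contraction exactly as $\Omega$ does. Fixing ${\bf X}\in\ker^m\Omega$, I set $\phi(t):=\inn({\bf X})F_t^*\Omega$, so $\phi(0)=0$ and, by the naturality noted above, $F_{t*}{\bf X}\in\ker^m\Omega$ is equivalent to $\phi(t)=0$. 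Differentiating gives $\frac{d}{dt}\phi(t)=\inn({\bf X})F_t^*\Lie(Y)\Omega$, and the plan is to recast this right-hand side as a pointwise (zeroth-order) linear function of $\phi(t)$, turning $\phi$ into the solution of a homogeneous fibrewise linear differential equation with $\phi(0)=0$, hence $\phi\equiv0$.

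To close that step I would exploit that the equivalent hypothesis is precisely the fibrewise inclusion of kernels $\ker\big(\inn(\cdot)\Omega\big)\subseteq\ker\big(\inn(\cdot)\Lie(Y)\Omega\big)$, which by linear algebra yields a factorization $\inn(\cdot)\Lie(Y)\Omega=\Psi\circ\inn(\cdot)\Omega$ through a bundle endomorphism $\Psi$ of $\Tan^*{\cal M}$; transporting $\Psi$ along the flow rewrites the derivative as $\frac{d}{dt}\phi(t)=\Psi_t\,\phi(t)$ with $\Psi_t:=F_t^*\circ\Psi\circ(F_{-t})^*$, a genuine linear ODE in each fibre, whence $\phi\equiv0$ and $Y$ is an infinitesimal symmetry. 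I expect the main obstacle to be exactly this integration of the infinitesimal condition back to the finite one: the bracket identity only guarantees that every derivative of $\phi$ at $t=0$ vanishes, which in the $\Cinfty$ category does not by itself force $\phi\equiv0$, so the genuine content is producing the smooth factorization $\Psi$ --- available where $\ker^m\Omega$ has locally constant rank, in particular on the constraint submanifold ${\cal S}$ of the premultisymplectic case to which the Note above restricts --- that makes the differential equation honest and its uniqueness applicable.
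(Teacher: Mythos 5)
Your proposal is correct, and its easy half (differentiating $\inn(F_{t*}{\bf X})\Omega=0$ at $t=0$) coincides with the paper's; but your converse takes a genuinely different route, even though both arguments ultimately run on the same engine, namely uniqueness of solutions of a homogeneous linear ODE with zero initial data. The paper never passes to $\Lie(Y)\Omega$: it takes a local basis $\{{\bf Z}_1,\ldots,{\bf Z}_r\}$ of $\ker^m\Omega$, extends it to a local basis $\{{\bf Z}_1,\ldots,{\bf Z}_r,{\bf Z}'_{r+1},\ldots,{\bf Z}'_c\}$ of $\vf^m({\cal M})$, expands $F_{t*}{\bf Z}_i$ in this basis with $t$-dependent coefficients $g_i^k$, and combines $\restric{\frac{\d}{\d t}}{t=s}F_{t*}{\bf Z}_i=F_{s*}[Y,{\bf Z}_i]$ with the hypothesis $[Y,{\bf Z}_i]=f_i^j{\bf Z}_j$ to get the linear system $\frac{\d g_i^k}{\d t}=(F_t^{-1})^*f_i^j g_j^k$; the coefficients along the ${\bf Z}'_k$ have zero initial condition, hence vanish identically, so $F_{t*}{\bf Z}_i$ stays in the span of the ${\bf Z}_j$. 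You instead convert the hypothesis, via the graded commutation identity (property 4), into the intrinsic statement $\inn({\bf X})\Lie(Y)\Omega=0$ on $\ker^m\Omega$, factor $\inn(\cdot)\Lie(Y)\Omega=\Psi\circ\inn(\cdot)\Omega$ through a bundle endomorphism of $\Tan^*{\cal M}$, and run the ODE on the contracted form $\phi(t)=\inn({\bf X})F_t^*\Omega$ itself. The regularity caveat you flag is not a disadvantage relative to the paper: your locally-constant-rank hypothesis (needed both to pass from the sectional kernel condition to the fibrewise one and to choose $\Psi$ smoothly) is exactly what the paper assumes implicitly when it posits that $\ker^m\Omega$ has a local basis extendable to a basis of $\vf^m({\cal M})$. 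What your version buys is that the hypothesis enters in invariant form, the hidden regularity assumption is made explicit, and you correctly diagnose why the soft argument fails (all $t$-derivatives of $\phi$ vanishing at $0$ proves nothing in the $\Cinfty$ category); what the paper's version buys is elementariness, since the coefficient expansion in the extended basis performs the factorization for free, with no auxiliary bundle map to construct.
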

\begin{proof}
 As $\ker^m\Omega$ is locally finite-generated, 
we can take a local basis $\moment{{\bf Z}}{1}{r}$ of $\ker^m\,\Omega$.
 Then, if $[Y,\ker^m\Omega]\subset\ker^m\Omega$, 
the assertion  is equivalent to proving that, if $F_t$ is a local flow of $Y$, then
 $[Y,{\bf Z}_i]=f_i^j{\bf Z}_j$ if, and only if, $F_{t*}{\bf Z}_i=g_i^j{\bf Z}_j$
 (for every $i=1,\ldots ,r$), where $g_i^j$ are
 functions defined on the corresponding open set,
 also depending on $t$.

First, it is clear that, if $F_{t*}{\bf Z}_i=g_i^j{\bf Z}_j$, then 
$[Y,{\bf Z}_i]=f_i^j{\bf Z}_j$.

 For the converse,
 suppose that $[Y,{\bf Z}_i]=f_i^j{\bf Z}_j$, and consider and extended local basis to the whole $\vf^m(\cal M)$: $\{\bf {Z}_1,\dots,\bf Z_r,\bf Z'_{r+1},\dots,\bf Z'_c\}$, where $c$ is the dimension of $\vf^m(\cal M)$. Remember that
 \ds\frac{\d}{\d t}\Big\vert_{t=s}F_{t*}{\bf Z}_i=F_{s*}[Y,{\bf Z}_i]\) .  Hence, on the one hand we obtain
 \beann
 F_{s*}[Y,{\bf Z}_i] &=&
 F_{s*}(f_i^j{\bf Z}_j)=(F_s^{-1})^*f_i^jF_{s*}{\bf Z}_j
\\ &=&
 (F_s^{-1})^*f_i^j(g_j^k{\bf Z}_k)+ \sum_{k=r+1}^c(F_s^{-1})^*f_i^j(g_j^k{\bf Z'}_k)  \ ,
 \eeann
 and on the other hand, we have that
 \beann
 \frac{\d}{\d t}\Big\vert_{t=s}F_{t*}{\bf Z}_i&=&
 \frac{\d}{\d t}\Big\vert_{t=s}g_i^k{\bf Z}_k+ \sum_{k=r+1}^c\frac{\d}{\d t}\Big\vert_{t=s}g_i^k{\bf Z'}_k
\\ &=&
 \frac{\d g_i^k}{\d t}\Big\vert_{t=s}{\bf Z}_k +
 \sum_{k=r+1}^c\frac{\d g_i^k}{\d t}\Big\vert_{t=s}{\bf Z'}_k
 \ .
\eeann
 Therefore, comparing these expressions, we conclude that
 $\displaystyle\frac{\d g_i^k}{\d t}=(F_t^{-1})^*f_i^jg_j^k$, for $k=1,\dots,c$.
 This is a system of ordinary linear differential equations
 for the functions $g_i^k$. With the initial condition $g_i^k(0)=\delta_i^k$ for $k\leq r$ and $g_i^k(0)=0$ for $k>r$,
 has a unique solution, defined for every $t$ on the domain of
 $F_t$. Then, taking this solution, 
we have proved the existence of functions
 $g_i^j$ such that $F_{t*}{\bf Z}_i=g_i^j{\bf Z}_j$,
and the result holds.
 \end{proof}

 Bearing in mind the properties of multivector fields
 we obtain the basic properties:
 \begin{itemize}
  \item
 If $Y_1,Y_2\in\vf ({\cal M})$ are infinitesimal symmetries,
 then so is $[Y_1,Y_2]$.
  \item
 If $Y\in\vf ({\cal M})$ is an infinitesimal symmetry and $\Omega$
is a premultisymplectic form, for every $Z\in\ker\Omega$,
then $Y+Z$ is also an infinitesimal symmetry.
 \end{itemize}

The classical interpretation that a symmetry of a system of
differential equations transforms solutions into solutions
is recovered from the following result:

 \begin{theorem}
 Let $\Phi\in{\rm Diff}({\cal M})$ be a symmetry of a (pre)multisymplectic system.
 \ben
 \item
 If ${\bf X}\in\ker^m\Omega$ is an integrable
 multivector field, then $\Phi$ transforms integral submanifolds
 of ${\bf X}$ into integral submanifolds of $\Phi_*{\bf X}$.
 \item
 In particular, if \,$\Phi\in{\rm Diff}({\cal M})$ restricts to a diffeormorphism 
$\varphi\colon M\to M$ (that is, $\varphi\circ\kappa=\kappa\circ\Phi$),
then, for every ${\bf X}\in\ker^m_{\omega(I)}\Omega$,
$\Phi$ transforms integral submanifolds of ${\bf X}$
into integral submanifolds of $\Phi_*{\bf X}$, and hence
$\Phi_*{\bf X}\in\ker^m_{\omega(I)}\Omega$.
 \een
 \label{gsymsol}
 \end{theorem}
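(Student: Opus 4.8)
The plan is to handle the two statements separately, reducing the first to the elementary fact that a diffeomorphism carries one distribution onto another and the second to the characterization of $\kappa$-transversality for integrable multivector fields recalled in the definition of $\kappa$-transverse multivector field.

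For part 1, I would first observe that, since $\Phi$ is a diffeomorphism and $\mathbf{X}$ is non-vanishing and locally decomposable, so is $\Phi_*\mathbf{X}$: writing $\mathbf{X}\vert_U=f\,X_1\wedge\cdots\wedge X_m$ locally gives $\Phi_*\mathbf{X}\vert_{\Phi(U)}=(f\circ\Phi^{-1})\,\Phi_*X_1\wedge\cdots\wedge\Phi_*X_m$. Hence the associated distributions satisfy $\mathcal{D}_{\Phi(y)}(\Phi_*\mathbf{X})=\Tan_y\Phi(\mathcal{D}_y(\mathbf{X}))$ for every $y$, and since $\Phi_*[X,Y]=[\Phi_*X,\Phi_*Y]$ the involutivity of $\mathcal{D}(\mathbf{X})$ transfers to $\mathcal{D}(\Phi_*\mathbf{X})$, so $\Phi_*\mathbf{X}$ is again integrable. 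Then, if $S$ is an integral submanifold of $\mathbf{X}$, i.e. $\Tan_yS=\mathcal{D}_y(\mathbf{X})$ for all $y\in S$, applying $\Tan_y\Phi$ yields $\Tan_{\Phi(y)}(\Phi(S))=\Tan_y\Phi(\mathcal{D}_y(\mathbf{X}))=\mathcal{D}_{\Phi(y)}(\Phi_*\mathbf{X})$, which is precisely the statement that $\Phi(S)$ is an integral submanifold of $\Phi_*\mathbf{X}$. The symmetry hypothesis $\Phi_*(\ker^m\Omega)\subset\ker^m\Omega$ enters only to guarantee that $\Phi_*\mathbf{X}$ remains in $\ker^m\Omega$, so that the conclusion is meaningful for the dynamics.

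For part 2, the four defining conditions of $\ker^m_{\omega(I)}\Omega$ must be verified for $\Phi_*\mathbf{X}$. Three are immediate: $\inn(\Phi_*\mathbf{X})\Omega=0$ because $\mathbf{X}\in\ker^m\Omega$ and $\Phi$ is a symmetry, while local decomposability and integrability were already established in part 1. The remaining and genuinely delicate condition is $\kappa$-transversality, and this is exactly where the hypothesis $\varphi\circ\kappa=\kappa\circ\Phi$ is used. I would invoke the characterization that an integrable multivector field is $\kappa$-transverse if, and only if, its integral submanifolds are images of local sections of $\kappa$. So let $\psi\colon U\subset M\to{\cal M}$ be a local section whose image $S={\rm Im}\,\psi$ is an integral submanifold of $\mathbf{X}$, and set $\psi':=\Phi\circ\psi\circ\varphi^{-1}\colon\varphi(U)\to{\cal M}$. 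Using $\kappa\circ\Phi=\varphi\circ\kappa$ and $\kappa\circ\psi={\rm id}_U$,
$$
\kappa\circ\psi'=\varphi\circ(\kappa\circ\psi)\circ\varphi^{-1}=\varphi\circ{\rm id}_U\circ\varphi^{-1}={\rm id}_{\varphi(U)}\ ,
$$
so $\psi'$ is again a local section of $\kappa$, and evidently ${\rm Im}\,\psi'=\Phi(S)$. By part 1, $\Phi$ carries the integral submanifolds of $\mathbf{X}$ bijectively onto those of $\Phi_*\mathbf{X}$, each being of the form $\Phi(S)={\rm Im}\,\psi'$; since every integral submanifold of $\Phi_*\mathbf{X}$ is therefore the image of a local section, the cited characterization gives that $\Phi_*\mathbf{X}$ is $\kappa$-transverse, whence $\Phi_*\mathbf{X}\in\ker^m_{\omega(I)}\Omega$.

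The hard part is exactly this transversality step in part 2: without the assumption that $\Phi$ projects to a base diffeomorphism $\varphi$, the image $\Phi(S)$ of a section need not be a section, since its tangent spaces may acquire $\kappa$-vertical components, and $\kappa$-transversality could fail. The projectability relation is precisely what forces sections to be carried to sections; here I would also be careful to note that $\varphi$ being a diffeomorphism of $M$ makes $\varphi(U)$ open and $\psi'$ well-defined on it.
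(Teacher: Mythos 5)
Your proposal is correct and follows essentially the same route as the paper's own proof: part 1 via pushing forward the involutive distribution spanned by local generators of $\mathbf{X}$, and part 2 via the induced section $\Phi\circ\psi\circ\varphi^{-1}$ (the paper's $\psi_M$, defined by $\Phi\circ\psi=\psi_M\circ\varphi$) together with the characterization of $\kappa$-transversality of integrable multivector fields through their integral sections. Your extra remarks (openness of $\varphi(U)$, the precise point where the symmetry hypothesis and the projectability hypothesis each enter) are details the paper leaves implicit, not a different argument.
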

 \begin{proof}
 \ben
 \item
 Let $X_1,\ldots ,X_m\in\vf ({\cal M})$ be vector fields
 locally expanding the involutive distribution associated with
 ${\bf X}$. Then $\Phi_*X_1,\ldots ,\Phi_*X_m$ generate another
 distribution which is also involutive, and, hence, is associated
 with a class of locally decomposable multivector fields
 whose representative is just $\Phi_*{\bf X}$, by construction.
 The assertion about the integral submanifolds is then immediate.
 \item
As $\Phi\in{\rm Diff}({\cal M})$ restricts to a diffeomorphism 
$\varphi$ in $M$ such that $\varphi\circ\kappa=\kappa\circ\Phi$ then,
for every $\psi\colon M\to {\cal M}$, integral section of
 ${\bf X}$, we can define $\psi_M\colon M\to {\cal M}$
 as $\Phi\circ\psi=\psi_M\circ \varphi$,
 which is also a section of $\kappa$ because
 $$
 \kappa\circ\psi_M =
 \kappa\circ\Phi\circ\psi\circ (\varphi)^{-1}=
 \varphi\circ\kappa\circ\psi\circ (\varphi)^{-1}=
 \varphi\circ (\varphi)^{-1}= {\rm Id}_M \ ,
 $$
 since $\kappa\circ\psi={\rm Id}_M$.
 Then, by construction, ${\rm Im}\,\psi_M=\Phi({\rm Im}\,\psi)$ is
 an integral submanifold of $\Phi_*{\bf X}$,
 and as is a section of $\kappa$, it is
 $\kappa$-transverse. Hence $\Phi_*{\bf X}$
 (which belongs to $\ker^m\Omega$,
 by Theorem \ref{previo0}) is integrable (then locally decomposable),
 and as its integral submanifolds are sections
 of $\kappa$, then $\Phi_*{\bf X}$ is $\kappa$-transverse, and
 thus $\Phi_*{\bf X}\in\ker^m_{\omega(I)}\Omega$. 
 \een
\end{proof}

From this result we obtain as an immediate corollary the following:

 \begin{theorem}
 Let $Y\in\vf ({\cal M})$ be an infinitesimal symmetry of a (pre)multisymplectic system
$\hs$, and $F_t$ a local flow of $Y$.
 \ben
 \item
 If ${\bf X}\in\ker^m\Omega$ is an integrable
 multivector field, then $F_t$ transforms integral submanifolds
 of ${\bf X}$ into integral submanifolds of $F_{t*}{\bf X}$.
 \item
In particular, if $Y\in\vf ({\cal M})$ is $\kappa$-projectable
(this means that there exists $Z\in\vf (M)$ such that
the local flows of $Z$ and $Y$ are $\kappa$-related),
then, for every ${\bf X}\in\ker^m_{\omega(I)}\Omega$,
$F_t$ transforms integral submanifolds of ${\bf X}$
into integral submanifolds of $F_{t*}{\bf X}$, and hence
$F_{t*}{\bf X}\in\ker^m_{\omega(I)}\Omega$.
 \een
 \label{gsymsol2}
 \end{theorem}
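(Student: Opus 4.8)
The plan is to deduce this theorem directly from Theorem \ref{gsymsol} by taking the diffeomorphism $\Phi$ there to be a local flow $F_t$ of the infinitesimal symmetry $Y$. The key observation is that, by the very definition of infinitesimal symmetry, each local flow $F_t$ of $Y$ is a local symmetry of the system: on the corresponding open sets it satisfies $F_{t*}(\ker^m\Omega) \subset \ker^m\Omega$. Thus $F_t$ plays exactly the role of $\Phi$ in the preceding theorem, and the only care needed is to restrict attention to the open domains on which the (generally non-complete) flow is defined.

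For item (1) I would simply invoke Theorem \ref{gsymsol}(1) with $\Phi = F_t$: since $F_t$ is a local symmetry and $\mathbf{X} \in \ker^m\Omega$ is integrable, $F_t$ carries integral submanifolds of $\mathbf{X}$ to integral submanifolds of $F_{t*}\mathbf{X}$. No further argument is required.

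For item (2), the task is to check that $\kappa$-projectability of $Y$ supplies precisely the extra hypothesis of Theorem \ref{gsymsol}(2), namely that $F_t$ restricts to a diffeomorphism $\varphi$ of $M$ with $\varphi \circ \kappa = \kappa \circ F_t$. If $Z \in \vf(M)$ is the projection of $Y$ and $G_t$ denotes its local flow, then the $\kappa$-relatedness of the flows of $Y$ and $Z$ means exactly $\kappa \circ F_t = G_t \circ \kappa$; hence $\varphi := G_t$ satisfies the required commutation relation. Applying Theorem \ref{gsymsol}(2) with $\Phi = F_t$ and $\varphi = G_t$ then yields, for every $\mathbf{X} \in \ker^m_{\omega(I)}\Omega$, both that $F_t$ transforms integral submanifolds of $\mathbf{X}$ into integral submanifolds of $F_{t*}\mathbf{X}$ and that $F_{t*}\mathbf{X} \in \ker^m_{\omega(I)}\Omega$.

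Since both items are obtained by specializing the previous theorem, there is no genuine obstacle here; the only points demanding attention are the bookkeeping of the local domains of $F_t$ (all assertions being local in $t$ and in $\mathcal{M}$) and the verification that the infinitesimal notion of $\kappa$-projectability integrates to the finite relation $\kappa \circ F_t = G_t \circ \kappa$ — which is just the standard fact that $\kappa$-relatedness of vector fields is equivalent to $\kappa$-relatedness of their flows.
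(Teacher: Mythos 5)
Your proposal is correct and matches the paper exactly: the paper offers no separate proof, stating only that this theorem ``we obtain as an immediate corollary'' of Theorem \ref{gsymsol}, which is precisely your specialization $\Phi = F_t$ (with $\varphi = G_t$ for item~2). Your added remarks on local domains and on the flow formulation of $\kappa$-projectability simply make explicit what the paper leaves implicit.
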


Symmetries allows us to obtain new conserved quantities from another one:

 \begin{prop}
\ben
\item
If $\Phi\in{\rm Diff}({\cal M}$ is a symmetry and $\xi\in\df^{m-1}({\cal M})$
is a conserved quantity of a (pre)multisymplectic system $\hs$, 
then $\Phi^*\xi$ is also a conserved quantity.
\item
 If $Y\in\vf ({\cal M})$, is an infinitesimal symmetry and $\xi\in\df^{m-1}({\cal M})$
is a conserved quantity of a (pre)multisymplectic system $\hs$, 
then $\Lie(Y)\xi$ is also a conserved quantity.
\een
 \end{prop}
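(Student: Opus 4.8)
The plan is to reduce both statements to the full-kernel characterisation of conserved quantities furnished by Theorem \ref{fintchar}: a form $\xi\in\df^{m-1}({\cal M})$ is conserved if, and only if, $\Lie({\bf Z})\xi=0$ for every ${\bf Z}\in\ker^m\Omega$. Thus in each case it will suffice to check that $\Lie({\bf Z})$ annihilates the new form for an arbitrary ${\bf Z}$ in the \emph{full} kernel $\ker^m\Omega$, which is precisely the set in terms of which the two symmetry hypotheses are phrased (a symmetry satisfies $\Phi_*(\ker^m\Omega)\subset\ker^m\Omega$, and an infinitesimal symmetry satisfies $[Y,\ker^m\Omega]\subset\ker^m\Omega$ by Theorem \ref{previo0}).

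For part (1), I would fix ${\bf Z}\in\ker^m\Omega$ and first invoke the naturality of the Lie derivative of forms along multivector fields under diffeomorphisms, namely $\Phi^*(\Lie({\bf W})\alpha)=\Lie(\Phi^*{\bf W})(\Phi^*\alpha)$; this follows at once from $\Phi^*\circ\d=\d\circ\Phi^*$ together with the corresponding naturality of the contraction $\inn$, since $\Lie({\bf W})$ is assembled from $\d$ and $\inn({\bf W})$ through the graded bracket. Applying this with ${\bf W}=\Phi_*{\bf Z}$ and using $\Phi^*\Phi_*={\rm Id}$ gives $\Lie({\bf Z})(\Phi^*\xi)=\Phi^*\bigl(\Lie(\Phi_*{\bf Z})\xi\bigr)$. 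Because $\Phi$ is a symmetry, $\Phi_*{\bf Z}\in\ker^m\Omega$, and because $\xi$ is conserved, Theorem \ref{fintchar} yields $\Lie(\Phi_*{\bf Z})\xi=0$; hence $\Lie({\bf Z})(\Phi^*\xi)=0$. Since ${\bf Z}$ was arbitrary, $\Phi^*\xi$ is conserved.

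For part (2), I would again fix ${\bf Z}\in\ker^m\Omega$ and use the defining property of the Schouten--Nijenhuis bracket. As $\Lie(Y)$ has degree $0$ and $\Lie({\bf Z})$ has degree $m-1$, their graded commutator reduces to the ordinary commutator, and the definition $[\Lie(Y),\Lie({\bf Z})]=\Lie([Y,{\bf Z}])$ gives $\Lie({\bf Z})\Lie(Y)\xi=\Lie(Y)\Lie({\bf Z})\xi-\Lie([Y,{\bf Z}])\xi$. Now $\Lie({\bf Z})\xi=0$ because $\xi$ is conserved, so the first term vanishes after applying $\Lie(Y)$; and since $Y$ is an infinitesimal symmetry, Theorem \ref{previo0} gives $[Y,{\bf Z}]\in\ker^m\Omega$, whence $\Lie([Y,{\bf Z}])\xi=0$ again by Theorem \ref{fintchar}. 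Therefore $\Lie({\bf Z})(\Lie(Y)\xi)=0$ for every ${\bf Z}\in\ker^m\Omega$, so $\Lie(Y)\xi$ is conserved. The only delicate points are bookkeeping ones: verifying the naturality identity for multivector fields in part (1), and getting the correct (here trivial) sign in the graded commutator of $\Lie(Y)$ and $\Lie({\bf Z})$ in part (2); once these are settled, both conclusions follow immediately from Theorems \ref{fintchar} and \ref{previo0}.
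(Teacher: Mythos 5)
Your proof is correct and follows essentially the same route as the paper's: both parts reduce to the full-kernel characterization of Theorem \ref{fintchar}, using naturality of the Lie derivative under pullback for part (1), and the Schouten--Nijenhuis identity $\Lie([Y,{\bf Z}])=[\Lie(Y),\Lie({\bf Z})]$ together with Theorem \ref{previo0} for part (2). Your write-up is in fact slightly more careful than the paper's, whose displayed identity in part (1) contains a typo ($\Phi^*\Lie(\Phi_*{\bf X})\Omega$ should read $\Phi^*\Lie(\Phi_*{\bf X})\xi$), which your argument implicitly corrects.
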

 \begin{proof}
 For every ${\bf X}\in\ker^m\Omega$, we have that:
\ben
\item
As $\Phi_*{\bf X}\in\ker^m\Omega$, we obtain:
$$
\Lie({\bf X})(\Phi^*\xi)=\Phi^*\Lie(\Phi_*{\bf X})\Omega=0 \ .
$$
\item
As $[{\bf X},Y]\in\ker^m\Omega$,
as a consequence of Theorem \ref{fintchar} we get
 $$
 \Lie({\bf X})\Lie(Y)\xi =
 \Lie([{\bf X},Y])\xi+\Lie(Y)\Lie({\bf X})\xi=
 \Lie([{\bf X},Y])\xi=0 \ .
 $$
\een
 \end{proof}

 \subsection{Cartan symmetries. Noether's theorem}
\label{noetherth}

Now we introduce the concept that generalizes the notion
of {\sl Cartan (Noether) symmetry} for non-autonomous mechanical systems \cite{LM-96,SC-81}.

 \begin{definition}
\ben
\item
A {\rm Cartan symmetry} of a (pre)multisymplectic system 
$\hs$ is a diffeomorphism $\Phi\colon{\cal M}
\to{\cal M}$
such that, $\Phi^*\Omega=\Omega$.
\item
An {\rm infinitesimal Cartan symmetry}
 of a (pre)multisymplectic system $\hs$ is a vector field $Y\in\vf ({\cal M})$
 satisfying that $\Lie(Y)\Omega=0$.
\een
 \label{CNsym}
 \end{definition}

\noindent {\bf Remarks}:
 \bit
 \item
 It is immediate to prove that, if $Y_1,Y_2\in\vf ({\cal M})$
 are infinitesimal Cartan symmetries, then so is $[Y_1,Y_2]$.
 \item
The condition $\Lie(Y)\Omega=0$
 is equivalent to demanding that $\inn(Y)\Omega$
 is a closed $m$-form in ${\cal M}$. Therefore, for every
 $p\in {\cal M}$, there exists an open neighborhood $U_p\ni p$,
 and $\xi_Y\in\df^{m-1}(U_p)$, such that
 $\inn(Y)\Omega=\d\xi_Y$ (on $U_p$).
 Thus, an infinitesimal Cartan symmetry of a (pre)multisymplectic system is
 just a {\sl locally Hamiltonian vector field}
 for the multisymplectic form $\Omega$,
 and $\xi_Y$ is the corresponding {\sl local Hamiltonian form},
 which is unique, up to a closed $(m-1)$-form.
 \eit

 \begin{prop}
\ben
\item
Every Cartan symmetry of a (pre)multisymplectic system 
$\hs$ is a symmetry.
 \item
 Every infinitesimal Cartan symmetry of a (pre)multisymplectic system $\hs$
 is an infinitesimal symmetry.
\een
 \end{prop}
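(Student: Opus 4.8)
The plan is to treat the two parts separately, deducing the second from the first by passing to flows. The whole argument rests on the naturality of the interior product under diffeomorphisms together with the defining hypotheses.

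For part (1), I would start from the identity $\Phi^*(\inn(\Phi_*X)\alpha)=\inn(X)(\Phi^*\alpha)$, valid for every vector field $X$ and every form $\alpha$ (it follows from $\Phi^*(\Phi_*X)=X$). Writing an arbitrary $\mathbf{X}\in\vf^m(\mathcal{M})$ locally as a sum of decomposable terms $X_{i_1}\wedge\dots\wedge X_{i_m}$ and using $\Phi_*(X_{i_1}\wedge\dots\wedge X_{i_m})=\Phi_*X_{i_1}\wedge\dots\wedge\Phi_*X_{i_m}$ together with $C^\infty$-linearity in the coefficients, this promotes to
$$\Phi^*\big(\inn(\Phi_*\mathbf{X})\Omega\big)=\inn(\mathbf{X})(\Phi^*\Omega).$$
If $\Phi$ is a Cartan symmetry then $\Phi^*\Omega=\Omega$, so for every $\mathbf{X}\in\ker^m\Omega$ the right-hand side is $\inn(\mathbf{X})\Omega=0$. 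Since $\Phi$ is a diffeomorphism, $\Phi^*$ is injective on forms, whence $\inn(\Phi_*\mathbf{X})\Omega=0$, that is, $\Phi_*\mathbf{X}\in\ker^m\Omega$. This establishes $\Phi_*(\ker^m\Omega)\subset\ker^m\Omega$, which is precisely the definition of a symmetry.

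For part (2), the cleanest route is to use the local flow. If $Y$ is an infinitesimal Cartan symmetry, then $\Lie(Y)\Omega=0$, which is equivalent to $F_t^*\Omega=\Omega$ for every local flow $F_t$ of $Y$. Hence each $F_t$ is a Cartan symmetry, so by part (1) it is a symmetry, i.e. $F_{t*}(\ker^m\Omega)\subset\ker^m\Omega$; by the definition of infinitesimal symmetry, this says exactly that $Y$ is an infinitesimal symmetry. Alternatively, I could argue infinitesimally and bypass the flow: using property 4 of the Schouten-Nijenhuis bracket listed above, for $\mathbf{X}\in\ker^m\Omega$,
$$\inn([Y,\mathbf{X}])\Omega=\Lie(Y)\inn(\mathbf{X})\Omega-\inn(\mathbf{X})\Lie(Y)\Omega=0,$$
because $\inn(\mathbf{X})\Omega=0$ and $\Lie(Y)\Omega=0$; thus $[Y,\ker^m\Omega]\subset\ker^m\Omega$, and Theorem \ref{previo0} identifies this with $Y$ being an infinitesimal symmetry.

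The computations are short, so there is no serious obstacle. The only point requiring a little care is the promotion of the naturality identity from vector fields to $m$-multivector fields, which is where local decomposability and the compatibility $\Phi_*(X_1\wedge\dots\wedge X_m)=\Phi_*X_1\wedge\dots\wedge\Phi_*X_m$ enter. Everything else reduces to the Cartan symmetry hypothesis ($\Phi^*\Omega=\Omega$, resp. $\Lie(Y)\Omega=0$) and the injectivity of $\Phi^*$.
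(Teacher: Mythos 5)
Your proof is correct and follows essentially the same route as the paper: part (1) via the naturality identity $\Phi^*\inn(\Phi_*\mathbf{X})\Omega=\inn(\mathbf{X})(\Phi^*\Omega)$ combined with $\Phi^*\Omega=\Omega$, and part (2) via $\inn([Y,\mathbf{X}])\Omega=\Lie(Y)\inn(\mathbf{X})\Omega-\inn(\mathbf{X})\Lie(Y)\Omega$ together with Theorem \ref{previo0}. In fact your two alternatives for part (2) are both in the paper (the bracket computation is its main argument, the flow argument is its parenthetical remark), and your explicit treatment of the extension to multivector fields and of the injectivity of $\Phi^*$ only makes explicit what the paper leaves implicit.
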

 \begin{proof}
 For every ${\bf X}\in\ker^m\Omega$, we have that:
 \ben
\item
If $\Phi\in{\rm Diff}({\cal M})$ is a Cartan symmetry then
\beann
\Phi^*\inn(\Phi_*{\bf X})\Omega=\inn({\bf X})(\Phi^*\Omega)=
\inn({\bf X})\Omega=0
\ &\Longleftrightarrow& \
\inn(\Phi_*{\bf X})\Omega=0
\\  &\Longleftrightarrow&\
\Phi_*{\bf X}\in\ker^m\Omega \ .
\eeann
 \item
If $Y\in\vf({\cal M})$ is an infinitesimal Cartan symmetry, then
$$
 \inn([Y,{\bf X}])\Omega= \Lie(Y)\inn({\bf X})\Omega-\inn({\bf X})\Lie(Y)\Omega=0 \ \Longleftrightarrow \ [Y,{\bf X}]\subset\ker^m\Omega \ .
$$
(Also, if $Y\in\vf({\cal M})$ is an infinitesimal Cartan symmetry,
by definition, its local flows are local Cartan symmetries,
then the result is a consequence of the above item).
\een
 \end{proof}

Then, the classical {\sl Noether's theorem}
can be generalized as follows:

 \begin{theorem}
 {\rm (Noether):}
 If $Y\in\vf ({\cal M})$ is an infinitesimal Cartan symmetry of a  (pre)multisymplectic
 system $\hs$, with $\inn(Y)\Omega=\d\xi_Y$. Then,
 for every ${\bf X}\in\ker^m_\omega\Omega$
(and hence for every ${\bf X}\in\ker^m_{\omega(I)}\Omega$), we have that
 $$
 \Lie({\bf X})\xi_Y=0 \ ;
 $$
 that is, any Hamiltonian $(m-1)$-form
 $\xi_Y$ associated with $Y$ is a conserved quantity of $\hs$.
(It is usually called a {\rm Noether current}, in this context).
 \label{Nth}
 \end{theorem}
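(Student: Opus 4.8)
The plan is to compute $\Lie(\mathbf{X})\xi_Y$ directly and exploit the two defining hypotheses: that $Y$ is an infinitesimal Cartan symmetry, $\inn(Y)\Omega = \d\xi_Y$, and that $\mathbf{X} \in \ker^m_\omega\Omega$ so that $\inn(\mathbf{X})\Omega = 0$. The key observation is that $\d\xi_Y$ is an $m$-form, so contracting the $m$-multivector field $\mathbf{X}$ against it and using the relation between the Lie derivative and the contraction of a multivector field (noted just after the definition of conserved quantity, namely $\Lie(\mathbf{X})\xi = (-1)^{m+1}\inn(\mathbf{X})\d\xi$ for an $(m-1)$-form $\xi$) should reduce everything to $\inn(\mathbf{X})\Omega = 0$.

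First I would write, using that observation applied to the $(m-1)$-form $\xi_Y$,
\begin{equation*}
\Lie(\mathbf{X})\xi_Y = (-1)^{m+1}\inn(\mathbf{X})\d\xi_Y .
\end{equation*}
Then I would substitute the Cartan symmetry hypothesis $\d\xi_Y = \inn(Y)\Omega$, obtaining
\begin{equation*}
\Lie(\mathbf{X})\xi_Y = (-1)^{m+1}\inn(\mathbf{X})\inn(Y)\Omega .
\end{equation*}
The next step is to commute the contraction by $\mathbf{X} \in \vf^m(\mathcal{M})$ past the contraction by the single vector field $Y \in \vf(\mathcal{M})$. Up to a sign depending on $m$, one has $\inn(\mathbf{X})\inn(Y)\Omega = \pm\, \inn(Y)\inn(\mathbf{X})\Omega$, since interchanging a degree-$m$ multivector contraction with a degree-one vector contraction on a form introduces only a graded sign $(-1)^m$. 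Because $\mathbf{X} \in \ker^m_\omega\Omega$ we have $\inn(\mathbf{X})\Omega = 0$, and hence $\inn(Y)\inn(\mathbf{X})\Omega = 0$, which gives $\Lie(\mathbf{X})\xi_Y = 0$ as required. The extension to every $\mathbf{X} \in \ker^m_{\omega(I)}\Omega$ is then immediate from the inclusion chain \eqref{chain}, since $\ker^m_{\omega(I)}\Omega \subset \ker^m_\omega\Omega$.

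The main obstacle I expect is bookkeeping the graded signs correctly when commuting $\inn(\mathbf{X})$ and $\inn(Y)$ and when invoking the identity $\Lie(\mathbf{X})\xi_Y = (-1)^{m+1}\inn(\mathbf{X})\d\xi_Y$; one must be careful that $\mathbf{X}$ has degree $m$, so $\Lie(\mathbf{X})$ is an operator of degree $m-1$ and the bracket $[\d, \inn(\mathbf{X})]$ from the definition of the Lie derivative of a multivector field contributes the factor $-(-1)^m = (-1)^{m+1}$. Since $\xi_Y$ is an $(m-1)$-form, the term $\inn(\mathbf{X})\d\xi_Y$ is the only surviving piece because $\d\xi_Y$ is an $m$-form while $\inn(\mathbf{X})\xi_Y$ vanishes for degree reasons (contraction of an $m$-multivector field against an $(m-1)$-form is zero). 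Once these sign conventions are fixed consistently with the graded-bracket definition given earlier in the paper, the computation is a one-line chain of equalities and no deeper structural input is needed.
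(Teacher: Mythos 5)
Your proposal is correct and follows essentially the same route as the paper's own proof: both compute $\Lie(\mathbf{X})\xi_Y$ via the graded-bracket definition, drop $\inn(\mathbf{X})\xi_Y$ for degree reasons, substitute $\d\xi_Y=\inn(Y)\Omega$, commute the two contractions up to the sign $(-1)^m$, and conclude from $\inn(\mathbf{X})\Omega=0$. The only difference is cosmetic: you also spell out the sign bookkeeping and the inclusion $\ker^m_{\omega(I)}\Omega\subset\ker^m_\omega\Omega$, which the paper leaves implicit.
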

 \begin{proof}
 If $Y\in\vf ({\cal M})$ is a Cartan symmetry then
 $$
 \Lie({\bf X})\xi_Y=\d\inn({\bf X})\xi_Y
 -(-1)^m\inn({\bf X})\d\xi_Y=
 -(-1)^m\inn({\bf X})\inn(Y)\Omega=
 -\inn(Y)\inn({\bf X})\Omega=0 \ .
 $$
 \end{proof}

To our knowledge,
 given a conserved quantity of a (pre)multisymplectic system,
 there is no a straightforward way of associating to it
 an infinitesimal Cartan symmetry $Y$ since, given a $(m-1)$-form 
$\xi$, the existence of a solution to the equation
 $\inn(Y)\Omega=\d\xi$ is not assured
 (even in the case $\Omega$ being 1-nondegenerate).
 Hence, in general, the {\sl converse Noether theorem} cannot be stated
 for (pre)multisymplectic systems.

Finally, as a particular case, we have:

 \begin{prop}
 Let $Y\in\vf ({\cal M})$ be an infinitesimal Cartan symmetry
 of an exact (pre)multisymplectic system $\hs$ (with $\Omega=-\d\Theta$). Therefore:
 \ben
 \item
 $\Lie(Y)\Theta$ is a closed form,
 hence, in an open set $U\subset {\cal M}$, there exist
 $\zeta_Y\in\df^{m-1}(U)$ such that $\Lie(Y)\Theta=\d\zeta_Y$.
 \item
 If $\inn(Y)\Omega=\d\xi_Y$,
 in an open set $U\subset {\cal M}$, then
 $$
 \Lie(Y)\Theta=\d (\inn(Y)\Theta-\xi_Y)=\d\zeta_Y
 \quad \mbox{\rm (in $U$)} \ ,
 $$
and hence $\xi_Y=\inn(Y)\Theta-\zeta_Y$ (up to a closed $(m-1)$-form).

As a particular case, if $\Lie (Y)\Theta=0$, we can take
 $\xi_Y=\inn(Y)\Theta$, and
$Y$ is said to be an {\rm exact infinitesimal Cartan symmetry}.
 \een
 \label{xizeta}
 \end{prop}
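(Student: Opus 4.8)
The final statement is Proposition \ref{xizeta}, which has two parts. Let me sketch how I'd prove it.

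The plan is to exploit the defining condition $\Lie(Y)\Omega = 0$ together with the exactness $\Omega = -\d\Theta$, and then apply Cartan's magic formula $\Lie(Y) = \d\inn(Y) + \inn(Y)\d$ throughout.

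For the first item, I would begin by computing $\d(\Lie(Y)\Theta)$. Since the Lie derivative commutes with the exterior derivative, $\d\Lie(Y)\Theta = \Lie(Y)\d\Theta = \Lie(Y)(-\Omega) = -\Lie(Y)\Omega = 0$, where the last equality is precisely the hypothesis that $Y$ is an infinitesimal Cartan symmetry. Hence $\Lie(Y)\Theta$ is a closed $m$-form, and by the Poincar\'e Lemma it is locally exact; that is, on a suitable open set $U \subset {\cal M}$ there exists $\zeta_Y \in \df^{m-1}(U)$ with $\Lie(Y)\Theta = \d\zeta_Y$.

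For the second item, I would expand $\Lie(Y)\Theta$ via Cartan's formula: $\Lie(Y)\Theta = \d\inn(Y)\Theta + \inn(Y)\d\Theta = \d\inn(Y)\Theta - \inn(Y)\Omega$. Using the hypothesis $\inn(Y)\Omega = \d\xi_Y$ on $U$, this becomes $\Lie(Y)\Theta = \d\inn(Y)\Theta - \d\xi_Y = \d(\inn(Y)\Theta - \xi_Y)$, which matches the claimed identity $\Lie(Y)\Theta = \d(\inn(Y)\Theta - \xi_Y) = \d\zeta_Y$ once we identify the result with $\zeta_Y$ from item~1. Comparing the two expressions $\d(\inn(Y)\Theta - \xi_Y) = \d\zeta_Y$ gives $\d(\inn(Y)\Theta - \xi_Y - \zeta_Y) = 0$, so $\inn(Y)\Theta - \xi_Y - \zeta_Y$ is closed; therefore $\xi_Y = \inn(Y)\Theta - \zeta_Y$ up to a closed $(m-1)$-form, as asserted. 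For the final special case, if $\Lie(Y)\Theta = 0$ we may take $\zeta_Y = 0$, whence $\xi_Y = \inn(Y)\Theta$ is a valid choice of Hamiltonian form, and $Y$ is by definition an exact infinitesimal Cartan symmetry.

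I do not expect any genuine obstacle here: the argument is a direct manipulation with Cartan's formula and the Poincar\'e Lemma, and both are already invoked elsewhere in the paper. The only point requiring a little care is the bookkeeping of the ``up to a closed $(m-1)$-form'' ambiguity, which reflects the non-uniqueness of the Hamiltonian form $\xi_Y$ already noted in the remark following Definition~\ref{CNsym}; one must state the conclusions as holding for a suitable representative rather than claiming strict equality.
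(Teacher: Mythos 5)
Your proposal is correct and follows essentially the same route as the paper: item~1 via $\d\Lie(Y)\Theta=\Lie(Y)\d\Theta=0$ together with the Poincar\'e Lemma, and item~2 via Cartan's formula $\Lie(Y)\Theta=\d\inn(Y)\Theta+\inn(Y)\d\Theta=\d(\inn(Y)\Theta-\xi_Y)$, with the identification $\xi_Y=\inn(Y)\Theta-\zeta_Y$ holding only up to a closed $(m-1)$-form. Your extra care in comparing $\d(\inn(Y)\Theta-\xi_Y)$ with $\d\zeta_Y$ and in handling the special case $\zeta_Y=0$ merely makes explicit what the paper leaves implicit.
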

 \begin{proof}
 \ben
 \item
 The first item is immediate since
 $\d\Lie(Y)\Theta=\Lie(Y)\d\Theta=0$.
 \item
 For the second item we have
 $$
 \Lie(Y)\Theta=
 \d\inn(Y)\Theta+\inn(Y)\d\Theta=
 \d\inn(Y)\Theta-\inn(Y)\Omega=
 \d (\inn(Y)\Theta-\xi_Y) \ .
 $$
 Hence we can write $\xi_Y=\inn(Y)\Theta-\zeta_Y$
 (up to a closed $(m-1)$-form).
 \een
 \end{proof}

In the case that $\ker\Omega:=\{ Y\in\vf({\cal M})\,\vert\, \inn(Y)\Omega=0\}\not=\{ 0\}$, these vector fields are Cartan symmetries. Then:

\begin{definition}
Let $\hs$ be a premultisymplectic system such that 
the equations \eqref{fundeqs} have solutions on ${\cal M}$.
Then $Y\in\vf({\cal M})$ is a {\rm gauge symmetry} of $\hs$  if \,$Y\in\ker\Omega$.
\end{definition}

 \subsection{Higher-order Cartan symmetries.\\
Generalized Noether's theorem}
\label{gennoetherth}

 Noether's theorem associates conserved quantites to Cartan
 symmetries. But there are symmetries which are not of Cartan type.
 Different attempts have been made to extend Noether's theorem in order
 to obtain the corresponding conservation laws for these kinds of symmetries.
Next we present a generalization of
theorem \ref{Nth}, which is based in the approach of
\cite{SC-81} for mechanical systems.

 \begin{definition}
 An {\rm infinitesimal Cartan symmetry of order $n$}
 of a  (pre) multisymplectic system $\hs$ is a vector field $Y\in\vf ({\cal M})$
 satisfying that:
 \ben
 \item
 $Y$ is a symmetry of the (pre)multisymplectic system $\hs$.
 \item
 $\Lie^n(Y)\Omega=0$, but
 $\Lie^k(Y)\Omega\not= 0$, for $k<n$.
 \een
 \label{CNsymn}
 \end{definition}

Cartan symmetries of order $n>1$ are not
 necessarily Hamiltonian vector fields for the (pre)multisymplectic form
 $\Omega$. Nevertheless we have that:

 \begin{prop}
 If $Y\in\vf ({\cal M})$ is an infinitesimal Cartan symmetry of order $n$
 of a (pre)multisymplectic system $\hs$, then the form
 $\Lie^{n-1}(Y)\inn(Y)\Omega\in\df^m({\cal M})$
 is closed.
 \end{prop}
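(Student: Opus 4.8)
The plan is to reduce the claim to a single application of two standard facts about the Lie derivative of differential forms: that $\Lie(Y)$ commutes with the exterior derivative $\d$, and that on forms $\Lie(Y)$ is given by Cartan's formula $\Lie(Y) = \d\inn(Y) + \inn(Y)\d$. Since $\Omega = -\d\Theta$ is assumed exact (and in any case closed), the term $\inn(Y)\d\Omega$ will vanish, which is the structural feature that makes the whole computation collapse.

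First I would record that, because $\Omega$ is closed, Cartan's formula gives
\[
\d\,\inn(Y)\Omega = \Lie(Y)\Omega - \inn(Y)\,\d\Omega = \Lie(Y)\Omega \, .
\]
Next I would use that $\d$ commutes with $\Lie(Y)$ (this is the compatibility of the Lie derivative of a vector field with $\d$, iterated $n-1$ times), so that
\[
\d\left( \Lie^{n-1}(Y)\inn(Y)\Omega \right) = \Lie^{n-1}(Y)\,\d\,\inn(Y)\Omega \, .
\]
Combining the two displays yields
\[
\d\left( \Lie^{n-1}(Y)\inn(Y)\Omega \right) = \Lie^{n-1}(Y)\Lie(Y)\Omega = \Lie^{n}(Y)\Omega \, .
\]

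Finally I would invoke the second defining condition of an infinitesimal Cartan symmetry of order $n$ (Definition \ref{CNsymn}), namely $\Lie^n(Y)\Omega = 0$, to conclude that the right-hand side vanishes, i.e. $\Lie^{n-1}(Y)\inn(Y)\Omega$ is closed. Note that $\Lie^{n-1}(Y)\inn(Y)\Omega$ is indeed an $m$-form, since $\inn(Y)\Omega \in \df^m({\cal M})$ and each $\Lie(Y)$ preserves the form degree, so the statement is consistent. Honestly, I do not expect a genuine obstacle here: the only points requiring care are the bookkeeping of the commutator $[\d, \Lie(Y)] = 0$ across the $n-1$ iterations and the sign-free form of Cartan's identity, both of which are routine; the hypothesis $\Lie^k(Y)\Omega \neq 0$ for $k<n$ plays no role in the argument and is merely part of the definition fixing the exact order $n$.
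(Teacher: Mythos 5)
Your proof is correct and is essentially the paper's own argument: the paper writes the identical chain $0=\Lie^n(Y)\Omega=\Lie^{n-1}(Y)\Lie(Y)\Omega=\Lie^{n-1}(Y)\d\inn(Y)\Omega=\d\Lie^{n-1}(Y)\inn(Y)\Omega$, using Cartan's formula with $\d\Omega=0$ and the commutation of $\d$ with $\Lie(Y)$, exactly as you do, merely read in the opposite direction. Your closing observations (degree bookkeeping, irrelevance of the condition $\Lie^k(Y)\Omega\neq 0$ for $k<n$) are accurate and need no changes.
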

 \begin{proof}
 In fact, from the definition \ref{CNsymn} we obtain
 $$
 0=\Lie^n(Y)\Omega=
 \Lie^{n-1}(Y)\Lie(Y)\Omega=
 \Lie^{n-1}(Y)\d\inn(Y)\Omega=
 \d\Lie^{n-1}(Y)\inn(Y)\Omega \ .
 $$
 \end{proof}

 This condition is equivalent to demanding that, for every
 $p\in{\cal M}$, there exists an open neighborhood $U_p\ni p$,
 and $\xi_Y\in\df^{m-1}(U_p)$, such that
 $\Lie^{n-1}(Y)\inn(Y)\Omega=\d\xi_Y$ (on $U_p$).
Then, theorem \ref{Nth} can be generalized 
as follows:

 \begin{theorem}
 If $Y\in\vf ({\cal M})$ is an infinitesimal Cartan symmetry of order $n$
 of a (pre)multisymplectic system $\hs$,
 with $\Lie^{n-1}(Y)\inn(Y)\Omega=\d\xi_Y$. Then,
 for every ${\bf X}\in\ker^m_\omega\Omega$
(and hence for every ${\bf X}\in\ker^m_{\omega(I)}\Omega$), we have that
 $$
 \Lie({\bf X})\xi_Y=0
 $$
 that is, the $(m-1)$-form
 $\xi_Y$ associated with $Y$ is a conserved quantity.
 \label{Nthgen}
 \end{theorem}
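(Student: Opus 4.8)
The plan is to mimic the computation in the proof of Theorem~\ref{Nth}, but with the single application of $\Lie(Y)$ there replaced by the $n$-fold iteration that the higher-order hypothesis provides. The goal is to show $\Lie({\bf X})\xi_Y=0$ for every ${\bf X}\in\ker^m_\omega\Omega$. Using the identity recorded just after the definition of conserved quantity, namely $\Lie({\bf X})\xi=(-1)^{m+1}\inn({\bf X})\d\xi$, and the hypothesis $\d\xi_Y=\Lie^{n-1}(Y)\inn(Y)\Omega$, I would first reduce the claim to establishing that
$$
\inn({\bf X})\Lie^{n-1}(Y)\inn(Y)\Omega=0 \ .
$$

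The key step is then to commute $\inn({\bf X})$ past the string of Lie derivatives $\Lie^{n-1}(Y)$ so that it can act directly on $\inn(Y)\Omega$. For this I would use property (4) in the list of multivector-field identities, $\inn([Y,{\bf X}])\Omega'=\Lie(Y)\inn({\bf X})\Omega'-\inn({\bf X})\Lie(Y)\Omega'$ (applied with $\Omega'$ one of the forms $\Lie^k(Y)\inn(Y)\Omega$), which lets me rewrite $\inn({\bf X})\Lie(Y)$ as $\Lie(Y)\inn({\bf X})-\inn([Y,{\bf X}])$. Because $Y$ is assumed to be a symmetry of the system, Theorem~\ref{previo0} gives $[Y,{\bf X}]\in\ker^m\Omega$, and more generally every iterated bracket ${\rm ad}_Y^{\,p}{\bf X}$ obtained by repeatedly bracketing with $Y$ remains in $\ker^m\Omega$. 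Pushing $\inn({\bf X})$ all the way to the right through $\Lie^{n-1}(Y)$ therefore produces, besides a leading term $\Lie^{n-1}(Y)\inn({\bf X})\inn(Y)\Omega$, only correction terms of the form $\Lie^{a}(Y)\inn(\mathbf{W})\Lie^{b}(Y)\inn(Y)\Omega$ with $a+b=n-2$ and $\mathbf{W}={\rm ad}_Y^{\,c}{\bf X}\in\ker^m\Omega$ an iterated bracket. The leading term vanishes because $\inn({\bf X})\inn(Y)\Omega=\pm\inn(Y)\inn({\bf X})\Omega=0$, using $\inn({\bf X})\Omega=0$.

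It then remains to kill the correction terms. The clean way is to organize the commutation as an induction on $n$: I would prove the stronger statement that $\inn(\mathbf{W})\Lie^{k}(Y)\inn(Y)\Omega=0$ for every $\mathbf{W}\in\ker^m\Omega$ and every $0\le k\le n-1$, by induction on $k$. The base case $k=0$ is exactly $\inn(\mathbf{W})\inn(Y)\Omega=\pm\inn(Y)\inn(\mathbf{W})\Omega=0$. For the inductive step, commute one $\Lie(Y)$ outward via identity (4), obtaining $\inn(\mathbf{W})\Lie^{k}(Y)\inn(Y)\Omega=\Lie(Y)\bigl(\inn(\mathbf{W})\Lie^{k-1}(Y)\inn(Y)\Omega\bigr)-\inn([Y,\mathbf{W}])\Lie^{k-1}(Y)\inn(Y)\Omega$; both summands vanish by the induction hypothesis applied to $\mathbf{W}$ and to $[Y,\mathbf{W}]\in\ker^m\Omega$, the latter membership being guaranteed by Theorem~\ref{previo0}. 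Taking $k=n-1$ and $\mathbf{W}={\bf X}$ gives the desired identity, and the final equivalence for $\ker^m_{\omega(I)}\Omega$ follows from the inclusion chain~\eqref{chain}.

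The main obstacle is purely bookkeeping: one must make sure that at every stage the multivector field contracted against the form is genuinely an iterated bracket lying in $\ker^m\Omega$, so that the symmetry hypothesis of Theorem~\ref{previo0} keeps applying, and that the sign conventions in the graded identities (1)--(4) are tracked correctly through the $n-1$ commutations. Casting the argument as the single induction on $k$ above is precisely what isolates this difficulty, since it reduces the whole computation to one application of identity~(4) together with the base case $\inn(\mathbf{W})\inn(Y)\Omega=0$, and thereby avoids expanding the full iterated commutator by hand.
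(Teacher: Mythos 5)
Your proposal is correct and takes essentially the same route as the paper's own proof: reduce via $\Lie({\bf X})\xi_Y=(-1)^{m+1}\inn({\bf X})\d\xi_Y$ to showing $\inn({\bf X})\Lie^{n-1}(Y)\inn(Y)\Omega=0$, then push $\inn({\bf X})$ through the Lie derivatives with identity (4), using Theorem \ref{previo0} to keep every iterated bracket in $\ker^m\Omega$ and the base identity $\inn({\bf W})\inn(Y)\Omega=\pm\inn(Y)\inn({\bf W})\Omega=0$. The only difference is presentational: your explicit induction on $k$ is a cleaner packaging of what the paper does by ``repeating the reasoning $n-2$ times'' with its schematic operator formula $(\Lie(Y)\inn({\bf X})-\inn({\bf Z}))^{n-1}$, and it correctly tracks the fact that the bracketed multivector field changes at each step.
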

 \begin{proof}
 If $Y\in\vf (J^{1*}E)$ is an infinitesimal Cartan symmetry of order $n$ then
 it is a symmetry, and then
 $[Y,{\bf X}]={\bf Z}\in\ker\,\Omega$.
 Therefore
 \beann
 \Lie({\bf X})\xi_Y &=&
 (-1)^{m+1}\inn({\bf X})\d\xi_Y=
 (-1)^{m+1}\inn({\bf X})\Lie^{n-1}(Y)\inn(Y)\Omega
 \\ &=&
 (-1)^{m+1}\inn({\bf X})\Lie(Y)\Lie^{n-2}(Y)\inn(Y)\Omega
  \\ &=&
 (-1)^{m+1}(\Lie(Y)\inn({\bf X})\Lie^{n-2}(Y)\inn(Y)\Omega
 -\inn([Y,{\bf X}])\Lie^{n-2}(Y)\inn(Y)\Omega)
  \\ &=&
 (-1)^{m+1}((\Lie(Y)\inn({\bf X})-\inn({\bf Z})\Lie^{n-2}(Y))\inn(Y)\Omega) \ ,
 \eeann
 and repeating the reasoning $n-2$ times we arrive at the
 result
 $$
 \Lie({\bf X})\xi_Y=
 (-1)^{m+1}((\Lie(Y)\inn({\bf X})-\inn({\bf Z}))^{n-1}
 \inn(Y)\Omega)=0 \ ,
 $$
 since $\inn({\bf X})\inn(Y)\Omega=0$
 and $\inn({\bf Z})\inn(Y)\Omega=0$.
 \end{proof}


 \begin{prop}
 Let $Y\in\vf ({\cal M})$ be an infinitesimal Cartan symmetry of order $n$
 of an exact (pre)multisymplectic system $\hs$. Therefore:
 \ben
 \item
 $\Lie^n(Y)\Theta$ is a closed form,
 hence, in an open set $U\subset {\cal M}$, there exist
 $\zeta_Y\in\df^{m-1}(U)$ such that  $\Lie^n(Y)\Theta=\d\zeta_Y$.
 \item
 If $\Lie^{n-1}(Y)\inn(Y)\Omega=\d\xi_Y$,
 in an open set $U\subset J^{1*}E$, then
 $$
 \Lie^n(Y)\Theta=
 \d (\Lie^{n-1}(Y)\inn(Y)\Theta-\xi_Y)=\d\zeta_Y
 \quad \mbox{\rm (in $U$)} \ .
 $$
 \een
\label{newcoro}
 \end{prop}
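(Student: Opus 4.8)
The plan is to mirror the argument of Proposition \ref{xizeta}, which is exactly the $n=1$ version of this statement, replacing the single Lie derivative by its $(n-1)$-fold iterate. The two facts I will use throughout are that the Lie derivative $\Lie(Y)$ along a vector field commutes with $\d$ (immediate from Cartan's formula $\Lie(Y)=\d\inn(Y)+\inn(Y)\d$ together with $\d^2=0$, hence $\Lie^{k}(Y)\d=\d\,\Lie^{k}(Y)$ for all $k$), and the exactness hypothesis $\Omega=-\d\Theta$.

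For the first item I would simply compute
$$\d\,\Lie^n(Y)\Theta=\Lie^n(Y)\d\Theta=-\Lie^n(Y)\Omega=0 \ ,$$
where the first equality pushes $\d$ through each factor $\Lie(Y)$, and the last equality is precisely the order-$n$ Cartan condition of Definition \ref{CNsymn}. Thus $\Lie^n(Y)\Theta$ is a closed $m$-form, and the Poincar\'e Lemma yields a local primitive $\zeta_Y\in\df^{m-1}(U)$ with $\Lie^n(Y)\Theta=\d\zeta_Y$ on a suitable open set $U\subset{\cal M}$.

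For the second item the idea is to expand one Lie derivative by Cartan's formula and then push the remaining $\Lie^{n-1}(Y)$ through $\d$. Concretely,
$$\Lie(Y)\Theta=\d\inn(Y)\Theta+\inn(Y)\d\Theta=\d\inn(Y)\Theta-\inn(Y)\Omega \ ,$$
so that, applying $\Lie^{n-1}(Y)$ and using the commutation $\Lie^{n-1}(Y)\d=\d\,\Lie^{n-1}(Y)$,
$$\Lie^n(Y)\Theta=\Lie^{n-1}(Y)\big(\d\inn(Y)\Theta-\inn(Y)\Omega\big)=\d\,\Lie^{n-1}(Y)\inn(Y)\Theta-\Lie^{n-1}(Y)\inn(Y)\Omega \ .$$
Substituting the hypothesis $\Lie^{n-1}(Y)\inn(Y)\Omega=\d\xi_Y$ then gives $\Lie^n(Y)\Theta=\d\big(\Lie^{n-1}(Y)\inn(Y)\Theta-\xi_Y\big)$, which is the claimed identity; comparison with item~(1) identifies $\zeta_Y=\Lie^{n-1}(Y)\inn(Y)\Theta-\xi_Y$ up to a closed $(m-1)$-form.

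I do not anticipate a genuine obstacle: the whole computation is formal manipulation of $\d$, $\inn(Y)$ and $\Lie(Y)$, and the only points requiring care are (a) establishing once and for all that $\Lie(Y)$ commutes with $\d$, so that the commutations performed inside $\Lie^{n-1}(Y)$ are legitimate, and (b) keeping track of degrees, noting that $\Theta$ is an $m$-form so that $\inn(Y)\Theta$ and hence $\Lie^{n-1}(Y)\inn(Y)\Theta$ are $(m-1)$-forms, consistent with $\zeta_Y,\xi_Y\in\df^{m-1}(U)$. The local existence of the primitive $\xi_Y$ is in any case already guaranteed by the closedness of $\Lie^{n-1}(Y)\inn(Y)\Omega$ proved in the preceding Proposition.
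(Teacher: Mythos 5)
Your proof is correct and follows essentially the same route as the paper's: item (1) by commuting $\d$ through $\Lie^n(Y)$ and invoking the order-$n$ condition, and item (2) by expanding $\Lie(Y)\Theta$ via Cartan's formula, applying $\Lie^{n-1}(Y)$, commuting with $\d$, and substituting $\Lie^{n-1}(Y)\inn(Y)\Omega=\d\xi_Y$. The only cosmetic difference is that you substitute $\d\Theta=-\Omega$ one step earlier than the paper does.
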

 \begin{proof}
 \ben
 \item
 The first item is immediate since
 $\d\Lie^n(Y)\Theta=\Lie^n(Y)\d\Theta=0$.
 \item
 For the second item we have
 \beann
 \Lie^n(Y)\Theta &=&
 \Lie^{n-1}(Y)\Lie(Y)\Theta=
 \Lie^{n-1}(Y)(\d\inn(Y)\Theta+\inn(Y)\d\Theta)
 \\ &=&
 \d\Lie^{n-1}(Y)\inn(Y)\Theta+\Lie^{n-1}(Y)\inn(Y)\d\Theta
 \\ &=&
 \d\Lie^{n-1}(Y)\inn(Y)\Theta-\d\xi_Y=
 \d (\Lie^{n-1}(Y)\inn(Y)\Theta-\xi_Y) \ .
 \eeann
 Hence we can write $\xi_Y=\Lie^{n-1}(Y)\inn(Y)\Theta-\zeta_Y$.
 \een
 \end{proof}



\subsection*{Acknowledgments}

{\small We acknowledge the financial support of the 
{\sl Ministerio de Ciencia e Innovaci\'on} (Spain), projects
MTM2014--54855--P,
MTM2015-69124--REDT, and of
{\sl Generalitat de Catalunya}, project 2014-SGR-634.


\end{document}